\newcommand{\N}{\mathbb{N}}
\renewcommand{\S}{\mathbb{S}}
\newcommand{\p}{\varphi}
\newcommand{\la}{\lambda}
\renewcommand{\a}{\alpha}
\renewcommand{\b}{\beta}
\def\d{\partial}
\newcommand{\h}[1]{\widehat{#1}}
\def\e{\varepsilon}
\newcommand{\R}{\mathbb{R}}
\newcommand{\C}{\mathbb{C}}
\def\NN{\mathbb{N}}
\newcommand{\A}{\mathfrak{A}}
\def\a{\alpha}
\def\b{\beta}
\renewcommand{\d}{\partial}
\newcommand{\zwo}[2]{\begin{pmatrix} {#1}\\{#2} \end{pmatrix}}
\renewcommand{\t}{\widetilde}
\renewcommand{\o}{\overline}
\renewcommand{\P}{\mathbb{P}}
\newtheorem{theorem}{Theorem}[section]
\newtheorem{lemma}[theorem]{Lemma}
\newtheorem{proposition}[theorem]{Proposition}
\newtheorem{corollary}[theorem]{Corollary}
\newtheorem{definition}[theorem]{Definition}
\newcommand{\D}{\nabla}
\renewcommand{\h}{\widehat}
\pgfplotsset{compat=newest}
\begin{document}
\title{Deformation of Framed Curves}

\author{Peter Hornung\footnote{Fakult\"at Mathematik, TU Dresden, 01062 Dresden (Germany),
\tt{peter.hornung@tu-dresden.de}}}

\date{}

\maketitle

\begin{abstract}
We consider curves $\gamma : [0, 1]\to\R^3$ endowed with an adapted orthonormal
frame $r : [0, 1]\to SO(3)$. We are interested
in the cases where the frame is constrained, in the sense that one
of its `curvatures' (i.e., off-diagonal elements of $r'r^T$)
is prescribed. One example is the Fr\'enet frame.
\\
In order to deform such constrained framed curves without spoiling the constraint,
we proceed in two stages. First we deform the frame $r$ in a way
that is naturally compatible with the differential constraint,
by interpreting it in terms of parallel transport on the sphere.
This provides a link to the differential geometry of surfaces.
\\
The deformation of the base curve $\gamma$ is achieved in a second step, by means
of a suitable reparametrization of the frame.
\\
We illustrate this deformation procedure by providing some applications:
first, we characterize the boundary conditions that are accessible
without violating the constraint; second, we provide a short and intuitive proof of
a smooth approximation result for constrained framed curves. In
both examples we prescribe clamped ends.
Finally, we also apply these
ideas to elastic ribbons with nonzero width.
\end{abstract}

\section{Introduction}

Elastic inextensible rods can be modelled by framed curves,
i.e., pairs $(\gamma, r)$ consisting of a curve
$\gamma$ mapping some interval $I\subset\R$ into $\R^3$
and a frame field $r : I\to SO(3)$ which is adapted to $\gamma$
in the sense that $r^T e_1 = \gamma'$.
In the context of the Cosserat theory of rods, the rows
$r_2, r_3$ are called directors.
For such rod theories and their generalizations 
we refer, e.g., to the monographs \cite{Antman, Ciarlet2}.
A concise exposition of Cosserat theory can be found, e.g., in
\cite{Schuricht}. In this context
Sobolev spaces of curves arise naturally; cf. also
\cite{Maddocks84}. Dynamical problems involving Kirchhoff rods \cite{Kirchhoff} were studied, e.g.,
in \cite{LinSchwetlick, AcquaLinPozzi}.
Rods with fixed ends play a role in \cite{FonsecaAguiar}.
\\
Framed curves also arise naturally in differential geometry,
because there is a frame field -- the Darboux frame --
that is naturally associated with any curve $\gamma$ in an immersed surface.
For this reason, framed curves play a role
in asymptotic theories for nonlinearly elastic plates \cite{FJM, H-CPAM} and,
in particular, ribbons \cite{KirbyFried, FHMP-SIAM, DiasAudoly, ParoniTomassetti}
such as the developable M\"obius strip
\cite{Sadowsky2, Wunderlich, StarostinHeijden, BartelsHornung}.
In \cite{DiasAudoly} elastic ribbons are described by framed curves endowed with
a `material' frame instead of the Fr\'enet frame used, e.g., in \cite{Wunderlich, KirbyFried}.
For a general approach we refer to \cite{H-CPAM, H-ARMA2}.
\\
In these situations a differential constraint
of the form $a_{ij} := (r^Te_i)'\cdot r^T e_j = 0$ arises naturally.
More precisely, in \cite{Wunderlich, StarostinHeijden, KirbyFried, RandrupRogen} the Fr\'enet frame 
is used to model a narrow ribbon; it is characterized
by the differential
constraint $a_{13}\equiv 0$. In \cite{DiasAudoly, FHMP-SIAM}
plates and ribbons are described by
framed curves satisfying the differential constraint $a_{12}\equiv 0$.
They can also described by framed curves arising from lines of curvature
\cite{H-CPAM, H-ARMA2}, which therefore satisfy the differential constraint
$a_{23}\equiv 0$ of vanishing torsion; this constraint also characterizes
the relatively parallel frame in \cite{Bishop}.
\\
Constrained framed curves also arise in DNA models
\cite{ChouaiebMaddocks, ChouaiebGorielyMaddocks}.
They have also been studied for their own sake,
see e.g. \cite{HondaTakahashi, YilmazTurgut, Rogen99, Scherrer, Solomon, Hwang}.
For a discussion of the various ways to frame a given
space curve we refer to \cite{Bishop, Silva}.
\\
For applications it is therefore useful to understand the possible shapes that
can be attained by constrained framed curves and 
to be able to deform framed curves
while preserving differential constraints of the form $a_{ij}\equiv 0$.
Such results have proven to be useful in numerical analysis \cite{BartelsReiter, Bartels2020}
as well as the calculus of variations \cite{H-CPAM, FHMP-SIAM, FHMP-clamped}.
\\
In the present article we address the same basic problem as in \cite{H-framed},
but with a very different approach, which is more
geometric and which leads to simple and intuitive constructions.
\\

More precisely, in the present article the deformation process is carried out in two successive stages.
In the first stage we modify the frame $r : I\to SO(3)$ while preserving
the differential constraint and possibly the value of $r$ at its endpoints.
In contrast to earlier work, this is not achieved by explicitly
modifying the curvatures $a_{ij}$ within some suitable Banach manifold
and subsequently solving an ordinary
differential equation. Instead, we interpret constraints of the form $a_{ij}\equiv 0$
geometrically in terms of parallel transport on the sphere.
This simple geometric viewpoint is explained in more detail in Section \ref{Parallel Transport}.
It allows us to work directly with the frames, which is
more intuitive than working with the curvatures.
Moreover, it provides a link to differential geometry
and therefore allows us to exploit some tools and notions from the 
differential geometry of surfaces. 
As a result, it also leads to a simple answer to a question
left open in \cite{H-framed}, cf. Section \ref{Accessible} below.
\\
In the second stage we reparametrize the frame in a suitable way.
This stage, described in Section \ref{Convex Integration},
is somewhat reminiscent of one-dimensional convex integration \cite{Gromov}.
\\

This article is organized as follows.
In Section \ref{Stages} we
we briefly introduce some basic notions and notation regarding framed curves,
and then we proceed to describe in some detail the two stages outlined earlier.
In Section \ref{Accessible} and Section \ref{Density} we provide two
simple applications of this method:
\\
In Section \ref{Accessible} we characterize
the possible clamped boundary conditions on a framed curve
that are compatible with the differential constraint. 
We show that a single constraint of the form $a_{ij}\equiv 0$ does
not restrict the boundary conditions that can be accessed by a framed curve,
therefore generalizing a corresponding local result in \cite{H-framed}.
\\
In Section \ref{Density}
we give a self-contained and short proof of one of the main results from
\cite{H-framed, BartelsReiter},
namely a smooth approximation result for framed
curves with prescribed endpoints. The proof given here differs fundamentally
from the proofs in those papers.
\\
In Section \ref{Ribbons} we 
consider ribbons with finite (as opposed to infinitesimal) width.
Such objects are more rigid than (infinitesimally narrow)
framed curves, because the admissible deformations are
isometric immersions of genuinely two-dimensional reference domains.
This requires geometric constructions which, while still elementary,
are less straightfoward than
those in the earlier sections. They combine
the approach to framed curves from earlier sections on one hand
with results about isometric immersions on the other.
Regarding the latter, we need
some observations about the Gauss (spherical image) map of isometric immersions;
in particular we need a good control on
its geodesic curvature (suitably defined) along regular curves.


\paragraph{Notation.}
We denote by $SO(n)$ the
set of rotation matrices. 
We identify $SO(2)$ with $\S^1\subset\C$ in the usual way.
The letter $I$ will denote both the identity matrix
and the open interval $(0, 1)$.
We denote by $(e_1, e_2, e_3)$ the canonical basis of $\R^3$
and by $B_r(a)$ the open ball of radius $r$ centered at $a$.

\section{Parallel Transport and Na\"ive Convex Integration}\label{Stages}

\subsection{Constrained Framed Curves}

In this article,
a framed curve $(\gamma, r)$ consists of an absolutely continuous frame field $r : I\to SO(3)$,
whose rows will henceforth be denoted by 
$$
r_i = r^Te_i,
$$
and of a curve $\gamma : I\to\R^3$ satisfying $\gamma' = r_1$.
The frame $r$ satisfies an ordinary differential equation of the form
\begin{equation}
\label{spelled}
\begin{pmatrix}
r_1
\\
r_2
\\
r_3
\end{pmatrix}'
=
\begin{pmatrix}
0 & a_{12} & a_{13}
\\
- a_{12} & 0 & a_{23}
\\
- a_{13} & - a_{23} & 0
\end{pmatrix}
\begin{pmatrix}
r_1
\\
r_2
\\
r_3
\end{pmatrix},
\end{equation}
for suitable `curvatures' $a_{ij} : I\to\R$.
For instance, if $\gamma$ is a curve in a surface and $r_3$ denotes the normal
to the surface along $\gamma$, then $r$ is the so-called Darboux frame.
In this case, $a_{13}$ is the normal curvature of $\gamma$, whereas
$a_{12}$ is its geodesic curvature and $a_{23}$ is its geodesic torsion.
\\
As in \cite{H-framed}, let $k$, $\ell\in \{1, 2, 3\}$ be unequal
and set
\begin{equation}
\label{aallg}
\mathfrak{A} = 
\left\{
\begin{pmatrix}
0 & a_{12} & a_{13}
\\
- a_{12} & 0 & a_{23}
\\
- a_{13} & - a_{23} & 0
\end{pmatrix}\in\R^{3\times 3} : a_{k\ell} = 0
\right\}.
\end{equation}
This gives rise to three different two dimensional subspaces $\A\subset\R^{3\times 3}$,
depending on the choice of $k$, $\ell$.
(The cases considered in \cite{H-framed} when $\A$ has one or three dimensions
are not relevant here.) From now on, $\A$ always is one of the sets \eqref{aallg}.

\subsection{
First Stage: Parallel Transport on the Sphere}\label{Parallel Transport}

In this short section we describe the viewpoint adopted in the
first stage of the deformation process.
\\
For a given absolutely continuous curve $\b : [a,b]\to\S^2$
let us denote by $\Pi_{\b}$ the linear isomorphism from the tangent
space $T_{\b(a)}\S^2$ to $\S^2$ at the point $\b(a)$ to the
tangent space $T_{\b(b)}\S^2$
that maps a tangent vector $v\in T_{\b(a)}\S^2$ to the tangent vector 
$\Pi_{\b}v\in T_{\b(b)}\S^2$
obtained by parallel transporting $v$ along $\b$ from $\b(a)$
to its endpoint $\b(b)$.

\begin{proposition}\label{propa}
Let $(i,k,\ell)$ be a permutation of $(1,2,3)$,
let $r : I\to SO(3)$ be absolutely continuous and assume
that $r_k'\cdot r_{\ell} \equiv 0$. 
Define $\b : I\to\S^2$ by $\b = r_i$. Then we have
\begin{equation}
\label{patr-1}
r_j(t) = \Pi_{\b|_{(0, t)}}r_j(0)\mbox{ for all }t\in I
\end{equation}
for $j = k$, $\ell$. 
\end{proposition}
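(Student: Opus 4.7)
The plan is to identify the parallel-transport condition on $\S^2$ with the orthogonality conditions forced on the rows $r_k, r_\ell$ by the combination of orthonormality of the frame and the single differential constraint $r_k'\cdot r_\ell\equiv 0$.

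First I would recall the standard intrinsic characterization: for an absolutely continuous curve $\b : [a,b]\to\S^2\subset\R^3$ and an absolutely continuous tangent vector field $V : [a,b]\to\R^3$ along $\b$ (so $V(t)\cdot\b(t)=0$), the field $V$ is parallel along $\b$ in the Levi-Civita sense if and only if the ambient derivative $V'(t)$ is normal to the sphere at $\b(t)$, i.e., $V'(t)\in\Span\{\b(t)\}$ for a.e.\ $t$. Equivalently, $V'(t)\cdot W(t) = 0$ for every tangent vector field $W$ along $\b$.

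Next I would check that $r_k$ and $r_\ell$ are tangent vector fields along $\b = r_i$. This is immediate from the orthonormality of the frame, since $r_k\cdot r_i\equiv 0$ and $r_\ell\cdot r_i\equiv 0$. I would then compute the ambient derivative $r_k'$ and decompose it in the frame $(r_i, r_k, r_\ell)$. The component along $r_k$ vanishes because $|r_k|\equiv 1$. The component along $r_\ell$ is precisely $r_k'\cdot r_\ell$, which vanishes by the imposed constraint. Hence $r_k'$ points along $r_i = \b$, which is exactly the parallel-transport condition for $r_k$ along $\b$. For $r_\ell$, the analogous argument works: its component along $r_\ell$ vanishes by normalization, and its component along $r_k$ vanishes after differentiating the identity $r_k\cdot r_\ell\equiv 0$ to obtain $r_\ell'\cdot r_k = -r_k'\cdot r_\ell = 0$.

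Finally, by uniqueness of solutions to the parallel-transport ODE along $\b$ with prescribed initial data $r_j(0)\in T_{\b(0)}\S^2$, the identity \eqref{patr-1} follows for $j=k,\ell$. I do not foresee a substantial obstacle here: the whole content of the statement is the elementary observation that, once orthonormality is used, a single off-diagonal vanishing condition $a_{k\ell}\equiv 0$ is exactly what is needed to kill the tangential part of $r_k'$ and $r_\ell'$ along $\b$. The only points requiring a touch of care are the absolute-continuity regularity (so that ``parallel transport'' is defined via the ODE $V' \parallel \b$ in an a.e.\ sense, which is standard) and making sure both fields $r_k$ and $r_\ell$ are handled, which is why differentiating $r_k\cdot r_\ell\equiv 0$ is needed.
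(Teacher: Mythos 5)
Your argument is essentially the paper's own: in both cases one observes that $r_k$ and $r_\ell$ are tangent fields along $\b=r_i$, and that $r_k'$ (resp.\ $r_\ell'$) is normal to $\S^2$ because its $r_k$-component vanishes by normalization and its $r_\ell$-component vanishes by the constraint, which is the defining condition for parallel transport. You merely spell out the symmetric step for $r_\ell$ (differentiating $r_k\cdot r_\ell\equiv 0$), which the paper compresses into ``the same is true for $r_\ell$.''
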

\begin{proof}
Since $r_k$ is a unit vector field, we have $r_k'\cdot r_k \equiv 0$.
Hence the constraint $r_k'\cdot r_{\ell} \equiv 0$
implies that $r_k'$ is normal to $\S^2$ along $\b$.
Hence $r_k$ is parallel transported along the curve $\b$.
The same is true for $r_{\ell}$.
\end{proof}

In view of Proposition \ref{propa}, 
prescribing the end value $r(1)$ of the frame $r$ is equivalent
to prescribing the end value $\b(1)$ of the curve $\b : I\to\S^2$
as well as the associated parallel transport map $\Pi_{\b} : T_{\b(0)}\S^2\to T_{\b(1)}\S^2$.
\subsection{Second Stage: Na\"ive Convex Integration}\label{Convex Integration}

A reparametrization of the curve $\b$ clearly does not change its
endpoints, nor does it affect the parallel transport map $\Pi_{\b}$.
It does, however, affect the base curve $\gamma : t\mapsto\int_0^t r_1$.
This naturally leads us to consider reparametrizations of $\b$ (or, equivalently, of the frame $r$)
as a separate degree of freedom. This degree of freedom leads to different
curves $\gamma$ corresponding to the same trace $r(I)\subset SO(3)$.
\\
Let us be more precise. For a given continuous frame field $r : I\to SO(3)$ we
define reparametrized frame fields as follows.
Let $\eta : I\to\R$ be integrable, bounded from below by a positive constant and such that $\int_0^1\eta = 1$.
For each such $\eta$ we define the reparametrized frame field $\t r^{(\eta)} : I\to SO(3)$ by setting
$$
\t r^{(\eta)}\left(\int_0^t\eta\right) = r(t)\mbox{ for all }t\in I.
$$
Clearly, if $r_1$ is constant,
then so is $\t r_1^{(\eta)}$, for whichever choice of $\eta$.
In this case 
$\int_0^1 \t r_1^{(\eta)} = \int_0^1 r_1$
for all $\eta$ as above. The generic case is covered by the following proposition.
It is related to one-dimensional convex integration \cite{Gromov}, but clearly
the situation considered here is very basic.

\begin{proposition}\label{convex}
Let $r\in C^0(\o I, SO(3))$ and
let $\o\gamma$ be a point in the interior of the convex hull
of $r_1(\o I)\subset\R^3$. Then there exists an everywhere
positive function $\eta\in C^{\infty}(\o I)$
with $\int_I \eta = 1$ such that $\int_0^1 \t r^{(\eta)}_1 = \o\gamma$.
\end{proposition}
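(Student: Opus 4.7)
The plan is to translate the proposition into a moment problem on $\o I$ and then solve that moment problem via Carath\'eodory's theorem together with a perturbation argument.

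First, the change of variable $\sigma = \int_0^t \eta$, so $d\sigma = \eta(t)\, dt$, converts the defining identity $\t r^{(\eta)}(\sigma) = r(t)$ into
\[
\int_0^1 \t r_1^{(\eta)}(\sigma)\, d\sigma \;=\; \int_0^1 r_1(t)\,\eta(t)\, dt.
\]
Hence it suffices to produce $\eta \in C^{\infty}(\o I)$ with $\eta > 0$, $\int_0^1 \eta = 1$ and $\int_0^1 r_1\,\eta\, dt = \o\gamma$.

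Next, since $\o\gamma$ lies in the topological interior of $\co\, r_1(\o I) \subset \R^3$, a standard sharpening of Carath\'eodory's theorem yields four points $t_1, \ldots, t_4 \in \o I$ such that $r_1(t_1), \ldots, r_1(t_4)$ are affinely independent and $\o\gamma = \sum_{i=1}^4 \lambda_i\, r_1(t_i)$ for some strictly positive weights $\lambda_i$ summing to $1$; equivalently, $\o\gamma$ lies in the open tetrahedron spanned by the $r_1(t_i)$. For each small $\e > 0$ and each $i$, I would choose an everywhere positive bump $\varphi_i^{(\e)} \in C^{\infty}(\o I)$ (one-sided at the boundary if $t_i \in \{0, 1\}$) with $\int_0^1 \varphi_i^{(\e)}\, dt = 1$ and concentrating at $t_i$ as $\e \to 0$, so that
\[
v_i^{(\e)} := \int_0^1 r_1\,\varphi_i^{(\e)}\, dt \;\longrightarrow\; r_1(t_i) \quad \text{as } \e \to 0
\]
by continuity of $r_1$. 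For any $\mu$ in the open simplex $\Delta^4 = \{\mu_i > 0,\ \sum_i \mu_i = 1\}$, the ansatz $\eta^{(\e)}_{\mu} := \sum_{i=1}^4 \mu_i\,\varphi_i^{(\e)}$ is a smooth, strictly positive density on $\o I$ of integral $1$ with $\int_0^1 r_1\,\eta^{(\e)}_{\mu}\, dt = \sum_i \mu_i\,v_i^{(\e)}$.

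It then remains to find $\mu \in \Delta^4$ with $\sum_i \mu_i\,v_i^{(\e)} = \o\gamma$. Since the $r_1(t_i)$ are affinely independent and $\o\gamma$ lies in the open tetrahedron they span, the perturbed points $v_i^{(\e)}$ remain affinely independent, and $\o\gamma$ still lies in their open tetrahedron, for all sufficiently small $\e$; such a $\mu$ therefore exists, and $\eta := \eta^{(\e)}_{\mu}$ is the desired density. The main obstacle is precisely this last perturbation step, namely showing that the strict positivity of the barycentric coordinates of $\o\gamma$ survives the passage from $\co\{r_1(t_i)\}$ to $\co\{v_i^{(\e)}\}$. This crucially uses the hypothesis that $\o\gamma$ is an \emph{interior} point (not merely on the boundary of the convex hull) together with the affine independence arranged by the Carath\'eodory step; both ensure that positivity of the barycentric coordinates is a robust, open property under small perturbations of the vertices.
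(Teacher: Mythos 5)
Your proof follows the same overall strategy as the paper's: replace the sample values $r_1(t_i)$ by nearby moments $v_i = \int r_1\,\varphi_i$ coming from smooth positive bumps concentrated at $t_i$, then form a convex combination of the bumps to hit $\o\gamma$. The gap is the Carath\'eodory step. You invoke ``a standard sharpening of Carath\'eodory's theorem'' to claim that, since $\o\gamma$ is an interior point of $\co r_1(\o I)$, there are four affinely independent points $r_1(t_i)$ with $\o\gamma$ strictly inside the tetrahedron they span. No such standard sharpening exists, and the assertion is false for general compact connected sets in $\R^3$. For example, if $K$ is the union of the three coordinate-axis segments $[-e_i, e_i]$, then $\co K$ is the octahedron, whose interior contains the origin, but every point of $K$ has at most one nonzero coordinate, so for any four points $v_1,\dots,v_4\in K$ the system $\sum\lambda_i v_i = 0$ with all $\lambda_i>0$ forces either affine dependence or a vanishing weight. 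Whether your assertion holds for the special case $K = r_1(\o I)\subset\S^2$ is not obvious, and you would need to prove it rather than cite it.

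The paper avoids this issue entirely. It chooses $N\geq 4$ points $r_1(t_i)$, not necessarily affinely independent, with $B_\e(\o\gamma)$ inside their convex hull (easy to arrange by compactness, since finite subsets of $r_1(\o I)$ approximate it in Hausdorff distance and hence their convex hulls exhaust the interior of $\co r_1(\o I)$), and then uses Lemma \ref{le-convex} to show that this containment survives a small perturbation of the $r_1(t_i)$ into the moments $v_i$. Your perturbation reasoning is sound, but it rests on the four-point, open-tetrahedron reduction, which is the part that needs — and lacks — justification. With the $N$-point version as in the paper the rest of your argument closes; as written, the Carath\'eodory step is a genuine hole.
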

\begin{proof}
There exist $N\in\N$, $\e > 0$ and $0 \leq t_1 < \cdots < t_N \leq 1$ such that $B_{\e}(\o\gamma)$
is contained in the convex hull of
$\{r_1(t_1), ..., r_1(t_N)\}$. Let $\delta$ be as furnished by Lemma \ref{le-convex} below, applied
with $a_i = r_1(t_i)$ and $x = \o\gamma$.
\\
For all $i = 1, ..., N$ we can find a function $\eta_i\in C^{\infty}(\o I)$ with $\int_I\eta_i = 1$
which is positive everywhere on $\o I$,
and for which 
\begin{equation}
\label{definitionvonai}
\t a_i = \int_I \eta_i r_1
\end{equation}
is contained in $B_{\delta}(r_1(t_i))$.
Such $\eta_i$ exist because $r_1$ is continuous and for each $i$ there exists a sequence of 
positive functions in $C^{\infty}(\o I)$ with unit mass converging weakly-$*$
as measures to the Dirac measure concentrated at $t_i$.
\\
By Lemma \ref{le-convex} the point $\o\gamma$ is contained in the convex hull of the $\t a_i$.
Hence by \cite[Chapter I, Theorem 2.3]{Rockafellar} there exist nonnegative numbers $\la_1$, ..., $\la_N$
with 
\begin{equation}
\label{convex-2}
\sum_{i = 1}^N\la_i = 1 
\end{equation}
and such that
\begin{equation}
\label{convex-1}
\o\gamma = \sum_{i = 1}^N\la_i\t a_i.
\end{equation}
Set $\eta = \sum_{i = 1}^N \la_i\eta_i$. Then $\eta\in C^{\infty}(\o I)$ is positive
because all $\eta_i$ are positive and at least one $\la_i$ is nonzero. Moreover, $\int_I\eta = 1$
due to \eqref{convex-2}. By the definition \eqref{definitionvonai} of $\t a_i$,
$$
\int_0^1 \t r^{(\eta)}_1
= \int_0^1 \eta r_1
= \sum_{i = 1}^N\la_i\t a_i.
$$
By \eqref{convex-1} the right-hand side agrees with $\o\gamma$.
\end{proof}

In the proof of Proposition \ref{convex} we used the following lemma,
the proof of which is left to the reader.

\begin{lemma}\label{le-convex}
Let $N\geq 4$, let $x$, $a_1$, ..., $a_N\in\R^3$ and let $\e > 0$. Then there exists $\delta > 0$
such that the following is true: If $B_{\e}(x)$ is contained in the convex hull of $\{a_1, ..., a_N\}$
and $\t a_1, ..., \t a_N\in\R^3$ are such that $|\t a_i - a_i| \leq\delta$ for $i = 1, ..., N$,
then $x$ is contained in the convex hull of $\{\t a_1, ..., \t a_N\}$.
\end{lemma}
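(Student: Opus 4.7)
The plan is to argue by contradiction using the separating hyperplane theorem. The underlying observation is that passing from the $a_i$ to the $\tilde a_i$ perturbs the support function $v\mapsto \max_i v\cdot a_i$ by at most $\delta$ in every direction, whereas the assumption $B_\e(x)\subset\co\{a_1,\ldots,a_N\}$ gives a gap of at least $\e$ in the separation estimate for $x$ against the original set. Choosing $\delta<\e$ (for instance $\delta=\e/2$) will force these two facts into contradiction.

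First I would fix $\delta$ with $\delta<\e$ and suppose, for contradiction, that $x\notin \tilde K:=\co\{\tilde a_1,\ldots,\tilde a_N\}$. Since $\tilde K$ is a nonempty compact convex subset of $\R^3$, the standard separation theorem (e.g.\ \cite[Chapter I, Theorem 2.3]{Rockafellar}, which is already cited in the paper) produces a unit vector $v\in\S^2$ and a real number $M$ such that
$$
v\cdot x \;>\; M \;\geq\; v\cdot \tilde a_i \quad\text{for every } i=1,\ldots,N.
$$

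Next I would exploit the hypothesis $B_\e(x)\subset\co\{a_1,\ldots,a_N\}$ at the specific boundary point $x+\e v$. Writing it as a convex combination $x+\e v=\sum_i\mu_i a_i$ and using $v\cdot a_i\leq v\cdot \tilde a_i+\delta$ (which follows from $|a_i-\tilde a_i|\leq\delta$ and $|v|=1$), one obtains
$$
v\cdot x+\e \;=\; \sum_i\mu_i\,(v\cdot a_i) \;\leq\; \max_i v\cdot a_i \;\leq\; \max_i v\cdot \tilde a_i+\delta \;\leq\; M+\delta.
$$
Hence $v\cdot x\leq M+\delta-\e<M$, contradicting $v\cdot x>M$.

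I do not expect any real obstacle: the argument is essentially one line of separation plus one line of perturbation. The hypothesis $N\geq 4$ is only relevant insofar as it is necessary for $\co\{a_1,\ldots,a_N\}$ to have nonempty interior in $\R^3$, which is in any case forced by the assumption $B_\e(x)\subset\co\{a_1,\ldots,a_N\}$; the number $N\geq 4$ itself plays no role in the estimate.
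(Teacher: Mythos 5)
Your proof is correct. The paper in fact leaves the proof of Lemma \ref{le-convex} to the reader, so there is nothing to compare against; the separation argument you give is a clean and standard way to establish it, and setting $\delta=\e/2$ (or any $\delta<\e$) works exactly as you say.

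Two minor points of precision. First, the point $x+\e v$ lies on the \emph{boundary} of the open ball $B_\e(x)$, not in it; but since $\co\{a_1,\ldots,a_N\}$ is compact, hence closed, the hypothesis $B_\e(x)\subset\co\{a_1,\ldots,a_N\}$ implies $\overline{B_\e(x)}\subset\co\{a_1,\ldots,a_N\}$, so the convex combination you use does exist. (Alternatively, apply the same estimate at $x+(1-\eta)\e v$ and let $\eta\downarrow 0$.) Second, the reference \cite[Chapter I, Theorem 2.3]{Rockafellar} that you borrow from the paper is a Carath\'eodory-type statement about representing points of a convex hull as convex combinations --- that is how the paper uses it in the proof of Proposition \ref{convex} --- not a separating hyperplane theorem; for the separation step you should instead invoke a standard strict separation result for a point and a compact convex set. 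Your remark on the role of $N\geq 4$ is also accurate: it is redundant given the hypothesis $B_\e(x)\subset\co\{a_1,\ldots,a_N\}$, and it plays no role in the estimate.
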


\subsection{Frames with Prescribed Curvature}\label{Prescribed}

In Section \ref{Parallel Transport}
we only focussed on homogeneous constraints of the form $a_{ij}\equiv 0$.
As an immediate consequence, one can also handle
inhomogeneous constraints of the form $a_{ij} = \kappa$ with
a prescribed function $\kappa\in L^1(I)$. This is achieved 
easily.
In order to be specific, we will only
consider the constraint $a_{12} = \kappa$. The results of this section
will not be needed until Section \ref{Ribbons}.

\begin{lemma}\label{reggi}
Let $p\in [1, \infty]$. If $\b\in W^{1,p}(I, \S^2)$ then
$t\mapsto \Pi_{\b|_{(0, t)}}v$ is in $W^{1,p}(I)$,
for all $v\in T_{\b(0)}\S^2$.
\end{lemma}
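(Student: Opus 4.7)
The plan is to characterize $V(t) := \Pi_{\b|_{(0,t)}}v$ as the solution of a linear ODE with $L^p$ right-hand side and then read off the regularity directly. Parallel transport on $\S^2$ is characterized by the two conditions $V(t)\in T_{\b(t)}\S^2$ for all $t$ and $V'(t)\perp T_{\b(t)}\S^2$ almost everywhere. Since the normal space to $\S^2$ at $\b(t)$ is spanned by $\b(t)$, the second condition means $V'(t) = \mu(t)\b(t)$ for some scalar $\mu$.

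To identify $\mu$ I would differentiate the identity $V(t)\cdot\b(t) = 0$, obtaining $V'(t)\cdot\b(t) + V(t)\cdot\b'(t) = 0$. Substituting $V' = \mu\b$ yields $\mu(t) = -V(t)\cdot\b'(t)$, so $V$ should satisfy
\begin{equation}
\label{ptODE}
V'(t) = -\bigl(V(t)\cdot\b'(t)\bigr)\b(t),\qquad V(0) = v.
\end{equation}
I would take \eqref{ptODE} as the defining equation for $V$ and invoke standard Carath\'eodory theory: the right-hand side is linear in $V$ with coefficient matrix in $L^p(I)\subset L^1(I)$ (because $\b'\in L^p$ and $\b$ is bounded), so a unique absolutely continuous solution exists on $\o I$. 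A short computation then confirms that this $V$ really is the parallel transport of $v$: indeed
\[
(V\cdot\b)' = V'\cdot\b + V\cdot\b' = -(V\cdot\b')|\b|^2 + V\cdot\b' = 0,
\]
so $V\cdot\b\equiv V(0)\cdot\b(0) = 0$, and similarly $(V\cdot V)' = 2V\cdot V' = -2(V\cdot\b')(V\cdot\b) = 0$, so $|V|\equiv|v|$.

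The Sobolev regularity then follows immediately: from \eqref{ptODE} and $|\b| = 1$ we obtain
\[
|V'(t)| \le |V(t)|\,|\b'(t)|\,|\b(t)| = |v|\,|\b'(t)|\quad\text{a.e.},
\]
so $V'\in L^p(I)$ and, since $V$ is bounded and $I$ is bounded, also $V\in L^p(I)$, proving $V\in W^{1,p}(I)$. The only slightly delicate point is the rigorous identification of $\Pi_{\b}$ as the solution of \eqref{ptODE} when $\b$ is merely absolutely continuous, but this can be handled by approximating $\b$ by smooth curves (where the characterization is classical) and passing to the limit using continuous dependence of the linear ODE \eqref{ptODE} on its $L^1$ coefficients.
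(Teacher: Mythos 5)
Your proof is correct, and it takes a genuinely different (and in some ways cleaner) route than the paper's. The paper works \emph{intrinsically}: it restricts attention to a hemisphere so that $\b$ lies in a single coordinate chart, writes the parallel-transport equation $V' = \Gamma(V, \b')$ using the (smooth, hence bounded) Christoffel symbols of $\S^2$ in those coordinates, and concludes $V'\in L^p$ from $\b'\in L^p$ and $|V|\equiv|v|$. You instead work \emph{extrinsically} in $\R^3$: you use the fact that parallelism along a curve in an embedded surface means $V'$ is ambient-normal, which for the sphere immediately gives the linear ODE $V' = -(V\cdot\b')\b$ with an explicit pointwise bound $|V'|\le|v|\,|\b'|$. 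The extrinsic route avoids the localization step altogether (no need to cover $\b(\o I)$ by charts and glue, which the paper tacitly does), replaces the implicit bound on Christoffel symbols by the sharp inequality $|V'|\le|v|\,|\b'|$, and makes the well-posedness transparent via Carath\'eodory theory for linear ODEs with $L^1$ coefficients. The only point to keep straight is the identification of the ODE solution with $\Pi_\b$ for merely absolutely continuous $\b$; since the ambient ODE is exactly the definition of parallel transport for $C^1$ curves and depends continuously on the $L^1$ coefficient $\b'$, your approximation remark (or, more simply, the observation that both the paper's intrinsic ODE and your extrinsic ODE have the same Carath\'eodory solution) settles this. Both proofs are correct; yours is more self-contained and gives a quantitative bound.
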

\begin{proof}
Since $\b$ is continuous, we may assume that $\b(I)$
is contained in the upper hemisphere.
Let $v\in T_{\b(0)}\S^2$ and set $V(t) = \Pi_{\b|_{(0, t)}}v$.
Then
$
V' = \Gamma(V, \b'),
$
where $\Gamma(V, \b')$ denotes a suitable contraction
with the (smooth) Christoffel symbols of $\S^2$ in the chosen
coordinates. Since $|V(t)| = |v|$ for almost every $t\in I$, we conclude
that indeed $V'\in L^p$.
\end{proof}

\begin{proposition}
\label{gauge}
Let $\kappa\in L^1(I)$
and
let $r\in L^{\infty}(I, SO(3))$ be such that $r_3\in W^{1,1}(I)$.
Then the following are equivalent:
\begin{enumerate}[(i)]
\item \label{gauge-i} For all $t\in I$ we have
\begin{equation}
\begin{split}
\label{gauge-4}
r_1(t) 
&=
\cos\left(\int_0^t\kappa\right) \Pi_{r_3|_{(0,t)}}r_1(0) 
+ \sin\left(\int_0^t\kappa\right) \Pi_{r_3|_{(0,t)}}r_2(0).
\end{split}
\end{equation}
\item \label{gauge-ii}
\label{darb-rem-b}
$r\in W^{1,1}(I)$ and $r_1'\cdot r_2 = \kappa$ almost everywhere.
\item \label{gauge-iii}
\label{darb-rem-c}
$r\in W^{1,1}(I)$ and there exist $\tau$, $\mu\in L^1(I)$ such that
\begin{equation}
\label{darb-cor-1}
r' = 
\begin{pmatrix}
0 & \kappa & \mu
\\
-\kappa & 0 & \tau
\\
-\mu & -\tau & 0
\end{pmatrix}
r
\end{equation}
\end{enumerate}
\end{proposition}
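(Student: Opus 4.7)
The plan is to treat (\ref{gauge-ii}) $\Leftrightarrow$ (\ref{gauge-iii}) as a soft consequence of skew-symmetry and to focus the genuine work on (\ref{gauge-i}) $\Leftrightarrow$ (\ref{gauge-ii}). The latter will follow from Proposition \ref{propa} once one introduces the parallel-transported orthonormal frame along $r_3$ and performs the standard angle-lifting $\R\to\S^1$.

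For (\ref{gauge-ii}) $\Leftrightarrow$ (\ref{gauge-iii}): since $r r^T \equiv I$, any $r \in W^{1,1}(I, SO(3))$ has $r' r^T$ skew-symmetric a.e., so (\ref{darb-cor-1}) holds with $\mu := r_1'\cdot r_3$ and $\tau := r_2'\cdot r_3$, both in $L^1$. The $(1,2)$-entry of $r'r^T$ is $r_1'\cdot r_2$, whence the identification with $\kappa$ under either condition is immediate.

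For the main equivalence, set
\[
e_j(t) := \Pi_{r_3|_{(0,t)}} r_j(0), \qquad j = 1, 2.
\]
Since $r_3\in W^{1,1}$, Lemma \ref{reggi} yields $e_1, e_2\in W^{1,1}(I,\R^3)\cap L^\infty$; the parallel transport argument in the proof of Proposition \ref{propa} shows that $e_j'$ is a.e.\ a scalar multiple of $r_3$, so in particular $e_j'\cdot e_i \equiv 0$ for $i,j\in\{1,2\}$. Orthonormality of $\{e_1(t),e_2(t),r_3(t)\}$ is preserved by parallel transport, and the continuous function $t\mapsto \det(e_1(t)\mid e_2(t)\mid r_3(t))\in\{\pm 1\}$ equals $+1$ at $t=0$, so the frame is right-handed for all $t$; this forces $r_3\times e_1 = e_2$ and $r_3\times e_2 = -e_1$. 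For (\ref{gauge-i}) $\Rightarrow$ (\ref{gauge-ii}), with $\theta(t):=\int_0^t\kappa$ one has $\cos\theta,\sin\theta\in W^{1,1}\cap L^\infty$, so multiplying into $e_j\in W^{1,1}\cap L^\infty$ shows $r_1\in W^{1,1}$ via (\ref{gauge-4}); then $r_2 = r_3\times r_1\in W^{1,1}$, which gives $r\in W^{1,1}$. Using $r_2 = -\sin\theta\, e_1 + \cos\theta\, e_2$ (from the cross product identities above) and noting that the $e_j'$-terms in the Leibniz expansion of $r_1'$ are proportional to $r_3\perp r_2$, the dot product $r_1'\cdot r_2$ collapses to $\kappa(\sin^2\theta+\cos^2\theta) = \kappa$.

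For (\ref{gauge-ii}) $\Rightarrow$ (\ref{gauge-i}): since $r_1\perp r_3$, expand $r_1 = A e_1 + B e_2$ with $A := r_1\cdot e_1$ and $B := r_1\cdot e_2$ in $W^{1,1}(I)\cap L^\infty$ satisfying $A^2+B^2 = 1$. The same orientation argument forces $r_2 = -B e_1 + A e_2$, and the relation $e_j'\perp r_1$ gives $A' = r_1'\cdot e_1$, $B' = r_1'\cdot e_2$, so $\kappa = r_1'\cdot r_2 = AB' - BA'$. Lifting the $W^{1,1}$ curve $t\mapsto A(t)+iB(t)\in \S^1$ to a continuous $\phi\in W^{1,1}(I,\R)$ yields $\phi' = AB' - BA' = \kappa$ a.e., and $\phi(0) = 0$ (since $(A,B)(0) = (1,0)$), hence $\phi = \theta$, which is exactly (\ref{gauge-4}). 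The only delicate points are the orientation-driven sign in $r_2 = -B e_1 + A e_2$ and the standard $W^{1,1}$-lifting across the covering $\R\to\S^1$; neither is a real obstacle, so the argument is essentially Leibniz-rule bookkeeping in the parallel frame $(e_1,e_2,r_3)$.
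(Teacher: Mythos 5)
Your argument is correct and amounts to the same approach the paper takes: pass to the parallel-transported frame along $r_3$ (via Lemma \ref{reggi}) and identify $\kappa$ with the rate of rotation of $r_1$ relative to that frame. The paper proves both implications simultaneously by stating a single claim about the pair $(r,\tilde r)$ with $r_3=\tilde r_3$ and $r_1 = \tilde r_1\cos K + \tilde r_2 \sin K$, and differentiating once to get the identity $r_1'\cdot r_2 = \tilde r_1'\cdot\tilde r_2 + \kappa$, after which Proposition \ref{propa} closes the loop; your version is the same derivative computation unpacked into $(A,B)$ coordinates, with the $W^{1,1}$ angle lift doing the work of that identity in the direction (\ref{gauge-ii})$\Rightarrow$(\ref{gauge-i}). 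The orientation/cross-product check is correct and is a point the paper folds silently into the requirement $\tilde r\in SO(3)$.
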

\begin{proof}
The equivalence of \eqref{darb-rem-c}
and \eqref{darb-rem-b} is clear.
\\
Notice that if \eqref{gauge-i} is satisfied,
then Lemma \ref{reggi} implies that $r\in W^{1,1}(I)$.
So in order to prove the equivalence of \eqref{gauge-i}
and \eqref{gauge-ii} we must show the following: if
$r$, $\t r\in W^{1,1}(I, SO(3))$ satisfy $r_3 = \t r_3$
and (with $K(t) = \int_0^t\kappa$)
\begin{equation}
\label{gauge-01}
r_1 = \t r_1\cos K + \t r_2\sin K,
\end{equation}
then 
\begin{equation}
\label{gauge-1a}
\t r_1(t) = \Pi_{r_3|_{(0, t)}} r_1(0)
\end{equation}
if and only if $r_1'\cdot r_2 = \kappa$.
\\
To verify this it is enough to take derivatives on both sides of \eqref{gauge-01}
to see that
$$
r_1'\cdot r_2 = \t r_1'\cdot\t r_2 + \kappa.
$$
So $r_1'\cdot r_2 = \kappa$ if and only if $\t r_1'\cdot\t r_2 = 0$,
which in turn is equivalent to \eqref{gauge-1a}.
\end{proof}

\paragraph{Remarks.}
\begin{enumerate}[(i)]
\item
In what follows, for $z\in\S^2$, an angle $\theta\in\R$ and $v\in T_z\S^2$,
we will use the notation
$$
R_z^{(\theta)}v = v\cos\theta + (z\times v)\sin\theta.
$$
This describes a counter-clockwise rotation by an angle $\theta$ of the tangent vector
$v$ within the tangent plane $T_z\S^2$.
\\
With this notation, \eqref{gauge-4} can be written as
$$
r_1(t) = R_{r_3(t)}^{\left( \int_0^t\kappa \right)}\Pi_{r_3|_{(0,t)}}r_1(0).
$$
\item Let $\a\in (\frac{1}{2},1)$. Due e.g. to results by Borisov,
the parallel transport
$\Pi_{\b}$ along a curve $\b\in C^{0, \a}(I, \S^2)$
is well-defined. Therefore, one can use Proposition \ref{gauge}
to make the following definition:
A frame $r\in L^{\infty}(I, SO(3))$ with $r_3\in C^{0, \a}(I)$
is said to have geodesic curvature $\kappa\in L^1(I)$ if
it satisfies \eqref{gauge-4}.
\end{enumerate}

\subsection{Curves with Bounded Geodesic Curvature}\label{Bdgeo}

In the following definition $\b$ need not be immersed.

\begin{definition}
\label{bdgeo}
An adapted frame for a curve $\b\in W^{1,1}(I, \S^2)$
is a map $r\in L^{\infty}(I, SO(3))$ satisfying
\begin{align}
\label{darb-1}
r_3 &= \b \mbox{ almost everywhere on }I
\\
\label{darb-2}
\beta'\times r_1 &= 0\mbox{ almost everywhere on }I.
\end{align}
\end{definition}

For $\b\in W^{1,1}(I, \S^2)$ define the open set
$$
C_{\b} = \{t\in I : \b\mbox{ is constant in a neighbourhood of }t\}.
$$
Let us denote the projective plane by $\P^2$.
For $v$, $w\in\S^2$ we will write
$$
v = w \mbox{ in }\P^2
$$
to mean that $v = w$ or $v = -w$.

\begin{lemma}\label{adaptedunique}
Let $\b\in W^{1,1}(I, \S^2)$ and let
$r\in L^{\infty}(I, SO(3))$
be an adapted frame for $\b$. Then the following are true:
\begin{enumerate}[(i)]
\item \label{adaptedunique-i} We have
$
\b' = (\b'\cdot r_1)\ r_1
$
almost everywhere on $I$ and
\begin{equation}
\label{adun-1}
r_1 = \frac{\b'}{|\b'|}\mbox{ in $\P^2$ almost everywhere on }\{\b'\neq 0\}.
\end{equation}
\item We have $r_3\in W^{1,1}(I)$ and
\begin{equation}
\label{adaptedlocal-torsion}
r_3'\cdot r_2 = 0\mbox{ almost everywhere on }I.
\end{equation}
\item If $r\in C^0(I, SO(3))$ and if $\t r\in C^0(I, SO(3))$ is another adapted frame, then
$$
\t r_1 = r_1\mbox{ in $\P^2$ on }I\setminus C_{\b}.
$$
\end{enumerate}
\end{lemma}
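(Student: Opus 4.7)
The plan is to address the three parts of Lemma \ref{adaptedunique} in order, exploiting the fact that the two conditions in Definition \ref{bdgeo} already encode all the information needed about $r_1$, $r_3$, and implicitly $r_2 = r_3\times r_1$.

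For part (i), I would argue as follows. Since $r_1(t)$ is a unit vector, the cross-product condition \eqref{darb-2} forces $\b'(t)$ to be parallel to $r_1(t)$ at a.e.\ $t$, and hence $\b' = (\b'\cdot r_1)\, r_1$ a.e. On the set $\{\b'\neq 0\}$ this implies that $r_1$ equals $\b'/|\b'|$ up to a sign, which is exactly \eqref{adun-1}. This step is essentially direct linear algebra and should pose no difficulty.

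For part (ii), I would first note that \eqref{darb-1} lets us replace the $L^\infty$ representative of $r_3$ by $\b$ itself, so that $r_3\in W^{1,1}(I)$ with $r_3' = \b'$ a.e. Combining part (i) with the identity $r_2\cdot r_1 = 0$ then gives $r_3'\cdot r_2 = (\b'\cdot r_1)(r_1\cdot r_2) = 0$ a.e., proving \eqref{adaptedlocal-torsion}.

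Part (iii) is where the main technical work lies. By part (i), both $r_1$ and $\t r_1$ coincide with $\b'/|\b'|$ in $\P^2$ on $\{\b'\neq 0\}$, hence agree with each other in $\P^2$ on that set. The task is to upgrade this to all of $I\setminus C_{\b}$. The key observation is that if $t\in I\setminus C_{\b}$, then in every neighbourhood of $t$, $\b$ is nonconstant; since $\b\in W^{1,1}$ is absolutely continuous, this forces $\{\b'\neq 0\}$ to have positive measure in every neighbourhood of $t$, so there is a sequence $t_n\to t$ with $r_1(t_n) = \varepsilon_n\,\t r_1(t_n)$ for suitable signs $\varepsilon_n\in\{\pm 1\}$. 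Passing to a subsequence along which $\varepsilon_n$ is constant and using the continuity of $r_1$ and $\t r_1$ then yields $r_1(t) = \pm\t r_1(t)$, i.e.\ equality in $\P^2$ at $t$. The main obstacle I anticipate is the sign ambiguity in the $\P^2$-identification; the compactness of $\{\pm 1\}$ makes it a minor annoyance rather than a real difficulty, but it is the only place where the continuity hypothesis is genuinely used.
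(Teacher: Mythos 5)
Your proposal is correct and follows essentially the same route as the paper: part (i) is the elementary linear-algebra observation from \eqref{darb-2}; part (ii) reads off $r_3 = \b$ and uses the parallelism of $\b'$ to $r_1$; part (iii) is the same density-plus-continuity argument, where you have simply spelled out the subsequence/sign-extraction step that the paper compresses into ``by the arbitrariness of $\e$ and by continuity.''
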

\begin{proof}
Equation \eqref{adun-1} is obvious.
Equation \eqref{adaptedlocal-torsion}
follows from \eqref{darb-1}, \eqref{darb-2} and the fact that $|\b|\equiv 1$.
\\
Now assume that $r$, $\t r$ are continuous adapted frames for $\b$.
Let $t_0\in I\setminus C_{\b}$. Then for all $\e > 0$ the set
$I_{\e} = (t_0-\e, t_0+\e)\cap\{\b' \neq 0\}$ has positive measure.
Hence by \eqref{adun-1} we have $r_1 = \t r_1$ in $\P^2$
almost everywhere on  $I_{\e}$, hence by the arbitrariness of $\e$
and by continuity $r_1(t_0) = \t r_1(t_0)$ in $\P^2$. 
\end{proof}

\begin{definition}\label{darb-def2}
A curve $\b\in W^{1,1}(I, \S^2)$ is said to have
geodesic curvature $\kappa\in L^1(I)$ if
there exists
an adapted frame $r\in W^{1,1}(I, SO(3))$ for $\b$ satisfying
\begin{equation}
\label{darb-3}
r_1'\cdot r_2 = \kappa \mbox{ almost everywhere on }I.
\end{equation}
\end{definition}

\paragraph{Remarks.}
\begin{enumerate}[(i)]
\item
An apparently more general definition would only require $r\in L^{\infty}(I)$
and replace \eqref{darb-3} by the condition that $r$ satisfy \eqref{gauge-4}.
However, when $\b\in W^{1,1}$ then it is equivalent to Definition
\ref{darb-def2}. This follows from 
Proposition \ref{gauge}.
\item In view of Lemma \ref{adaptedunique}
the geodesic curvature $\kappa$ is uniquely determined by $\b$
almost everywhere on $I\setminus C_{\b}$.
\end{enumerate}

The next lemma shows that having bounded geodesic curvature
is a local property.

\begin{lemma}\label{adaptedlocal}
Let $\b\in W^{1,1}(I, \S^2)$ and assume that for every $t_0\in \o I$ 
there exists $\e > 0$ such that the restriction
of $\b$ to $\o I\cap (t_0-\e, t_0+\e)$ has bounded geodesic curvature.
Then $\b$ has bounded geodesic curvature. 
\end{lemma}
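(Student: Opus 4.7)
The strategy is compactness plus a global construction using Proposition~\ref{gauge}. By compactness of $\overline{I}$ and the hypothesis, I would first extract finitely many open intervals $U_1,\dots,U_n$ covering $\overline{I}$, ordered left to right with $U_{j-1}\cap U_j$ a nonempty interval and no triple overlaps, such that on each $U_j$ there is an adapted frame $r^{(j)}\in W^{1,1}(U_j,SO(3))$ with $\|(r^{(j)}_1)'\cdot r^{(j)}_2\|_{L^\infty(U_j)}\le C$ for a common constant $C$. I would then define $\kappa\in L^\infty(I)$ by setting $\kappa:=(r^{(j)}_1)'\cdot r^{(j)}_2$ on $U_j\setminus\bigcup_{i<j}U_i$, so $|\kappa|\le C$ almost everywhere, and construct the global frame $r\in W^{1,1}(I,SO(3))$ via formula \eqref{gauge-4} with this $\kappa$ and initial condition $r(0):=r^{(1)}(0)$.

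By Proposition~\ref{gauge}, this $r$ automatically satisfies $r_3=\b$ and $r_1'\cdot r_2=\kappa$ a.e., so $|r_1'\cdot r_2|\le C$. The remaining task is to verify that $r$ is adapted, i.e., $\b'\times r_1=0$ a.e. I would prove by induction on $j$ that $r$ is adapted on $V_j:=\bigcup_{i\le j}U_i$. The base case $j=1$ is immediate since $r=r^{(1)}$ on $U_1$ by matching initial data and identical $\kappa$.

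For the inductive step, on $U_j\setminus V_{j-1}$ both $r$ and $r^{(j)}$ satisfy \eqref{gauge-4} with the same $\kappa$ but different initial data at the cut point $c:=\sup U_{j-1}\in U_j$. By the induction hypothesis $r(c)=\pm r^{(j-1)}(c)$; suppose $c\in I\setminus C_\b$. Then by Lemma~\ref{adaptedunique}(iii) the two adapted frames $r^{(j-1)}(c)$ and $r^{(j)}(c)$ agree in $\P^2$, so $r(c)=R^{\sigma\pi}_{\b(c)}r^{(j)}(c)$ for some $\sigma\in\{0,1\}$. Since constant rotations $R^{\sigma\pi}_\b$ commute with parallel transport and with the $R^{\int_c^t\kappa}_\b$-factor in \eqref{gauge-4}, the sign propagates, giving $r_1=\pm r^{(j)}_1$ on $U_j\setminus V_{j-1}$; hence $r_1$ is parallel to $\b'$ wherever $\b'\neq 0$ and $r$ is adapted on $V_j$.

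The main obstacle is that a cut point $c$ may lie in $C_\b$, in which case the discrepancy between $r(c)$ and $r^{(j)}(c)$ can be an arbitrary rotation around $\b(c)$, and the sign-propagation argument breaks once $\b'$ becomes nonzero past the $C_\b$-component containing $c$. I would resolve this by refining the cover so that all cut points lie in $I\setminus C_\b$: if $I\setminus C_\b=\emptyset$ then $\b$ is locally and hence globally constant by connectedness, and a constant adapted frame has zero geodesic curvature; otherwise $I\setminus C_\b$ is a nonempty closed subset of $\overline I$, and continuity of $\b$ together with non-constancy guarantee that $I\setminus C_\b$ meets each overlap $U_{j-1}\cap U_j$ densely enough to place the cut point inside it. Alternatively, one can absorb the rotational discrepancy by locally adjusting $\kappa$ on a buffer subinterval of the $C_\b$-component containing $c$ (where $\kappa$ is unconstrained by adaptedness since $\b'=0$) so that $\int_c^q\kappa$ differs from $\int_c^q\kappa_j$ by an integer multiple of $\pi$ at the right endpoint $q$ of that $C_\b$-component; because the cover is finite, this adjustment keeps $\|\kappa\|_\infty$ finite.
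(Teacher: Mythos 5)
Your overall scheme is in the same spirit as the paper's: cover $\o I$ by finitely many intervals on which adapted frames exist, then glue, with the only real difficulty arising when a cut point lies in $C_\b$. Where the paper glues the frames directly (reducing to $N=2$ by induction and splitting into the two cases $J_1\cap J_2\setminus C_\b\neq\emptyset$ and $J_1\cap J_2\subset C_\b$), you glue the curvatures $\kappa_j$ into a global $\kappa$, build $r$ from \eqref{gauge-4}, and verify adaptedness by induction. That repackaging is fine and your identification of the $C_\b$ obstruction is exactly right. The problems are in the two proposed resolutions.

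Your first fix does not work. You claim that if $\b$ is not globally constant, then ``$I\setminus C_\b$ meets each overlap $U_{j-1}\cap U_j$ densely enough to place the cut point inside it.'' That is false: take $\b$ constant on $(0.4,0.6)$ and non-constant elsewhere, and a cover with $U_1\cap U_2=(0.45,0.5)\subset C_\b$. No cut point in this overlap avoids $C_\b$, and you cannot escape this by ``refining'' the cover: the only way to get an overlap hitting $I\setminus C_\b$ is to enlarge some $U_j$ past the whole constancy component, i.e.\ to already know bounded geodesic curvature there, which is what you are trying to prove. The hypothesis only gives you local control, and the local neighbourhoods around interior points of a $C_\b$-component can all be contained in it.

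Your second fix is the right idea and corresponds to what the paper actually does, but the condition you state is not the one you need. Writing $r_1(c)=R_{\b(c)}^{(\a)}r_1^{(j)}(c)$ for the angular discrepancy $\a$ at the cut point $c$, and noting that $\Pi_{\b|_{(c,q)}}$ is the identity on $T_{\b(c)}\S^2$ because $\b$ is constant on $(c,q)$, one gets
$$
r_1(q)=R_{\b(c)}^{\left(\int_c^q\kappa+\a\right)}r_1^{(j)}(c),
\qquad
r_1^{(j)}(q)=R_{\b(c)}^{\left(\int_c^q\kappa_j\right)}r_1^{(j)}(c).
$$
So $r_1(q)=\pm r_1^{(j)}(q)$ precisely when $\int_c^q\kappa-\int_c^q\kappa_j\equiv-\a\pmod\pi$, not ``an integer multiple of $\pi$'' as you wrote; the latter only absorbs the discrepancy when $\a\equiv 0\pmod\pi$, which is exactly the easy case you were not in. Replacing your condition by the correct one, the adjustment is a bounded modification of $\kappa$ on $(c,q)$ (adaptedness is vacuous there since $\b'\equiv 0$), and since the cover is finite this keeps $\kappa\in L^\infty$. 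With that correction your second route is sound; the paper reaches the same effect more directly by smoothly interpolating $r_1$ in the fixed tangent plane $T_{\b(t_0)}\S^2$ over the constancy interval, which avoids having to track the angle $\a$ explicitly.

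Two further small points to tighten: you need $q=\sup$ of the $C_\b$-component to lie in $\o{U_j}$ for $r^{(j)}(q)$ to make sense (if the $C_\b$-component spans several $U_j$'s, propagate the discrepancy through them, which works because adaptedness is vacuous throughout); and you should state explicitly that $0\in U_1$ so that the initial condition $r(0)=r^{(1)}(0)$ is meaningful.
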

\begin{proof}
We cover $\o I$ by finitely many relatively
open intervals $J_1$, ..., $J_N\subset\o I$
on each of which $\b$ has bounded geodesic curvature. So for each $i = 1, ..., N$
there is an adapted frame $r^{(i)}\in W^{1,1}(J_i, SO(3))$ for $\b$.
\\
By iteration we may assume without loss of generality
that $N = 2$ and that $0\in J_1$ and $1\in J_2$.
To construct an adapted frame defined on all of $J_1\cup J_2$,
we observe that two cases can occur:
\\
If 
$J_1\cap J_2\setminus C_{\b}$ is nonempty, then let $t_0$ be a point in this set
and define
$$
r =
\begin{cases}
r^{(1)}&\mbox{ on }(0, t_0]
\\
r^{(2)}&\mbox{ on }(t_0, 1).
\end{cases}
$$
The map $r$ is continuous because (after possibly replacing
$r^{(2)}$ by $-r^{(2)}$) we have $r^{(1)}(t_0) = r^{(2)}(t_0)$
by Lemma \ref{adaptedunique}. Hence $r\in W^{1,1}(I)$ and,
moreover, $r_1'\cdot r_2$ is bounded. So indeed $\b$ has bounded
geodesic curvature on $I$.
\\
If 
$
J_1\cap J_2\subset C_{\b}
$
then $J_1\cap J_2$ is contained in a maximal interval $(t_0, t_1)\subset C_{\b}$.
If $(t_0, t_1) = I$, then $\b$ is constant and there is nothing to prove.
Otherwise, we denote by $\t r_1 : [t_0, t_1]\to\S^2$ a smooth interpolation
from $r_1^{(1)}(t_0)$ to $r_1^{(2)}(t_1)$ within the unit circle in the
plane $T_{\b(t_0)}\S^2$.
Then we define $\t r : [t_0, t_1]\to SO(3)$
by setting $\t r_3 = \b$, and we define
$$
r =
\begin{cases}
r^{(1)} &\mbox{ on }[0, t_0]
\\
\t r &\mbox{ on }(t_0, t_1)
\\
r^{(2)} &\mbox{ on }[t_1, 1].
\end{cases}
$$
It is easy to see that $r\in W^{1,1}$ with $r_1'\cdot r_2\in L^{\infty}$,
because the same is true for $r^{(1)}$ and $r^{(2)}$.
\end{proof}

\section{Accessible Boundary Conditions}\label{Accessible}

A (midline of a) ribbon with clamped lateral boundaries
is described by a framed curve for which the initial value $(\gamma(0), r(0))$
and its end value $(\gamma(1), r(1))$ are prescribed.
However, a constraint of the form $a_{ij}\equiv 0$ clearly restricts the possible
shapes of framed curves $(\gamma, r)$. Therefore, it is a priori not obvious which 
clamped boundary conditions
can be achieved by such constrained framed curves.
\\
By applying a rigid motion we may assume without loss of generality
that $\gamma(0)$ is the origin and $r(0)$ is the identity matrix. Then
this amounts to the question which values $\gamma(1)$ and $r(1)$ can be attained
for the solution $r$ of $r' = Ar$ with inital value $r(0) = I$
and $\gamma(t) = \int_0^t r_1$, where $A$ is a map taking values
only in one of the three sets $\A$ defined by \eqref{aallg}.
\\
This question is answered in this section.
Exploiting the connection to parallel transport on the sphere elicited by the
first stage, the possible values of $r(1)$ will be readily seen to be arbitrary, 
because the holonomy group of $\S^2$ is $SO(2)$. The arbitrariness of $\gamma(1)$
then follows by carrying out the second stage.
\\
More precisely,
in this section we will prove the following result. It asserts that all boundary
conditions can be attained by a suitable choice of (even a smooth)
$A : I\to\A$.
\begin{proposition}\label{accessible}
For all $\o r\in SO(3)$ and all $\o\gamma\in B_1(0)$ there is
an $A\in C^{\infty}(\o I, \A)$ such that 
the solution $r$ of $r' = Ar$ with inital value $r(0) = I$
satisfies $r(1) = \o r$ and $\int_0^1 r_1 = \o\gamma$.
\end{proposition}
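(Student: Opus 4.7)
My plan is to carry out the two stages described in Section \ref{Stages}. In Stage 1, I construct a smooth $r\in C^\infty(\bar I,SO(3))$ with $r(0)=I$, $r(1)=\bar r$, satisfying $r'r^T\in\mathfrak A$ pointwise, and with the additional feature that $\bar\gamma$ lies in the interior of the convex hull of the trace $r_1(\bar I)\subset\S^2$. In Stage 2, Proposition \ref{convex} then supplies a positive $\eta\in C^\infty(\bar I)$ with $\int_I\eta=1$ and $\int_0^1\tilde r^{(\eta)}_1=\bar\gamma$. Reparametrization preserves the endpoint values, and since $(\tilde r^{(\eta)})'(\tilde r^{(\eta)})^T$ is just $\eta^{-1}$ times the value of $r'r^T$ at the original parameter, and $\mathfrak A$ is a linear subspace, the reparametrization also preserves the pointwise constraint. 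Hence $A:=(\tilde r^{(\eta)})'(\tilde r^{(\eta)})^T$ belongs to $C^\infty(\bar I,\mathfrak A)$ and $\tilde r^{(\eta)}$ is its solution with the required endpoint and integral values.

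All substance is in Stage 1. Fix $i\in\{1,2,3\}\setminus\{k,\ell\}$. By Proposition \ref{propa}, smooth frames $r$ satisfying $r'r^T\in\mathfrak A$ with $r(0)=I$ are in bijection with smooth curves $\beta=r_i:\bar I\to\S^2$ satisfying $\beta(0)=e_i$: the other two rows $r_k,r_\ell$ are obtained from the tangent vectors $e_k,e_\ell\in T_{e_i}\S^2$ by parallel transport along $\beta$. My task therefore reduces to producing a smooth $\beta$ with (a) $\beta(1)=\bar r_i$; (b) the parallel transport $\Pi_\beta$ sends the (positively oriented) tangent basis $(e_k,e_\ell)$ at $e_i$ to the (positively oriented) tangent basis $(\bar r_k,\bar r_\ell)$ at $\bar r_i$; and (c) the induced curve $r_1$---which equals $\beta$ in the case $\mathfrak A=\{a_{23}=0\}$ and the parallel transport of $e_1$ along $\beta$ in the two other cases---has a trace whose convex hull contains $\bar\gamma$ in its interior.

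For (c), since $|\bar\gamma|<1$ I first pick four points $p_1,\ldots,p_4\in\S^2$ with $\bar\gamma\in\mathrm{int}(\mathrm{conv}\{p_1,\ldots,p_4\})$; this is elementary (for $\bar\gamma\neq 0$ take $p_1=\bar\gamma/|\bar\gamma|$ and place $p_2,p_3,p_4$ on a small spherical cap near $-p_1$ forming a triangle whose induced tetrahedron strictly contains $\bar\gamma$; for $\bar\gamma=0$ use any regular tetrahedron inscribed in $\S^2$). I then choose intermediate frames $\rho_j\in SO(3)$ with $(\rho_j)_1=p_j$ and a partition $0=t_0<\cdots<t_5=1$, and on each subinterval $[t_j,t_{j+1}]$ I build a smooth sub-arc of $\beta$ from $(\rho_j)_i$ to $(\rho_{j+1})_i$ (with $\rho_0=I$, $\rho_5=\bar r$) whose parallel transport realizes the prescribed rotation of tangent frames. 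Such a sub-arc always exists: starting from any smooth connecting curve in $\S^2$, which has some holonomy $R_0\in SO(2)$, I attain the target rotation by inserting into the interior a small loop enclosing a suitable signed area; by Gauss--Bonnet the holonomy around a loop on $\S^2$ is precisely rotation by its enclosed signed area, so by varying this area all of $SO(2)$ is realized.

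The main technical obstacle is to combine these sub-arcs into a single $C^\infty$ curve $\beta$. Concatenation yields a continuous piecewise-smooth curve with potential corners at the partition points $t_j$. I remove each corner by a $C^\infty$ interpolation in an arbitrarily small neighborhood of $t_j$, and I compensate any holonomy error introduced by this local smoothing with a further tiny area-adjustment loop inserted elsewhere in $\beta$, using Gauss--Bonnet once more. The condition $\bar\gamma\in\mathrm{int}(\mathrm{conv}(\cdot))$ is open under $C^0$ perturbations of the trace of $r_1$, so property (c) survives these arbitrarily small modifications. The resulting smooth $\beta$ determines the frame $r$ required for Stage 1, and the two-stage construction is complete.
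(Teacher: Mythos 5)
Your proposal follows the paper's proof very closely: both proceed via the two stages of Section \ref{Stages}, using parallel transport on $\S^2$ and the Gauss--Bonnet identity (Corollary \ref{corle}) to prescribe $r(1)$ while forcing the trace of $r_1$ through enough points of $\S^2$ that $\o\gamma$ lies in the interior of its convex hull, and then Proposition \ref{convex} for the integral condition. Two presentational differences are worth noting. First, you fix full intermediate frames $\rho_j\in SO(3)$ with $(\rho_j)_1=p_j$ rather than only the vectors $\o r_1^{(j)}$; this is slightly cleaner, since it automatically places the targets for parallel transport in the correct tangent spaces $T_{\beta(t_j)}\S^2$, a compatibility the paper leaves implicit. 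Second, and more substantively, you obtain smoothness of $\beta$ by an a posteriori corner-smoothing followed by a ``holonomy compensation'' loop, whereas the paper's Lemma \ref{gaubo} builds its loop so that the modified arc agrees with a fixed smooth background curve near the endpoints of each subinterval, making the concatenation automatically $C^\infty$ with no need to compensate. Your route is not wrong, but it is more delicate than you acknowledge: the compensation loop must itself be inserted without creating fresh corners, and its insertion perturbs the parallel transport map from $0$ to intermediate times, so one must argue (as you gesture at via openness of the convex-hull condition) that all the error terms can be made simultaneously small without an infinite regress. Adopting the paper's device of matching tangents with a smooth background curve removes these issues at the source and is what I would recommend; with that tightening your argument is correct and essentially the same as the paper's.
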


\paragraph{Remarks.}
\begin{enumerate}[(i)]
\item The proof will show that there is much freedom in the choice of $A$.
It could also be chosen to be piecewise constant.
\item In view of the second stage, we will see that it is enough to satisfy only the condition
$r(1) = \o r$. Below, we prove the existence of such an $r$ by means of the first stage,
but one could also apply, e.g., the techniques from \cite{H-framed} or use the properties of
the exponentional map for $SO(3)$.
\item An endpoint $\o\gamma$ with $|\o\gamma| = 1$ can clearly
only be achieved if $\o\gamma = e_1$, and in this case we must have $r_1\equiv e_1$.
\begin{itemize}
\item In the case $a_{23} \equiv 0$ this implies that $\b \equiv e_1$ and so necessarily $\o r = I$.
\item In the case $a_{13} \equiv 0$ we have $\b = r_2$, so
it implies that $\Pi_{\b|_{(0, t)}}e_1 = e_1$ for all $t$. This is the case if, and only
if, $r_2$ takes its values in the big circle $\S^2\cap \{e_1\}^{\perp}$.
(The fact that $r_2$ must be contained in a big circle of course follows as well from the observation that $r_2$ must take values in the orthogonal
complement of $e_1$.) So $\o r$ can be attained precisely if $\o r_2$ lies in the circle $\S^2\cap \{e_1\}^{\perp}$.
\item The case $a_{12}\equiv 0$ is similar to the case $a_{13}\equiv 0$.
\end{itemize}
\end{enumerate}

\subsection{Holonomy}

In order to prove Proposition \ref{accessible}, let us first show
that for all $\o r\in SO(3)$
there is some $A\in C^{\infty}(\o I, \A)$ such that the solution $r$ of $r' = Ar$ 
with $r(0) = I$ satisfies
$r(1) = \o r$.
This will follow from the connectedness of $\S^2$ and from the 
fact that the holonomy group of $\S^2$ is $SO(2)$.
In order to provide explicit constructions, 
we will now describe a simple way of introducing a loop into a given curve $\b$.
This construction allows us to preserve both the smoothness of $\b$
the fact that it is immersed. At the same time it allows to modify
the parallel transport along $\b$.

\subsubsection{A building block}\label{Loop}

We will define an immersed smooth parametrization $\h\b : \S^1\to\R^2$
of a square with rounded corners: let
$\t\rho : \R\to [1, \sqrt{2}]$ be the $\frac{\pi}{2}$-periodic
function determined by
$$
\t\rho(t) = \frac{1}{\cos t}\mbox{ for }t\in\left[-\frac{\pi}{4}, \frac{\pi}{4}\right].
$$
Now let $\rho : \R\to [1, \sqrt{2}]$ be a function
with the following properties:
\begin{itemize}
\item $\rho$ is even and $\frac{\pi}{2}$-periodic.
\item $\rho = \t\rho$ on $(-\pi/8, \pi/8)$.
\item $\rho'\in C^{\infty}_0(-\pi/4, \pi/4)$.
\item $\rho'\geq 0$ on $(0, \pi/4)$.
\end{itemize}
So $\rho$ is just a suitable smooth truncation of $\t\rho$. Now define the immersed smooth curve
$\Xi : \R\to\R^2$ by setting
$$
\Xi(t) = \rho(t) e^{it}.
$$
It is clearly $2\pi$-periodic and it parametrizes the boundary of
a square with rounded off corners, whose perimeter we denote by $\h\ell = \int_0^{2\pi}|\Xi'|$.
\\
We define $\h\b : \R\to\R^2$ by setting
$$
\h\b\left(\int_0^t |\Xi'| \right) = \Xi\left(t - \frac{\pi}{2}\right).
$$
Then $|\h\b'|\equiv 1$ and $\h\b(0) = -e_2$, and $\h\b$ is $\h\ell$-periodic.

\subsubsection{Concatenation of Continuous Curves}
\label{Concatenation}

In what follows we will consider reparametrized
concatenations of continuous curves
$$
\b^{(i)} : [t_0^{(i)}, t_1^{(i)}]\to X\mbox{ for }i = 1, ..., N
$$
into some manifold $X$ satisfying
$$
\b^{(i)}(t_1^{(i)}) = \b^{(i+1)}\left(t_0^{(i + 1)}\right)
\mbox{ for }i = 1, ..., N-1.
$$
The reparametrized concatenation
$
\b = \b^{(N)}\bullet \cdots \bullet \b^{(1)}
$
is the continuous curve $I\to X$ obtained by concatenation
of the $\b^{(i)}$ and subsequent reparametrization by a
constant factor.
\\
To make this explicit let us first consider the case $N=2$. Define
$\ell^{(i)} = t_1^{(i)} - t_0^{(i)}$ and $\ell = \ell^{(1)} + \ell^{(2)}$
as well as 
$$
I^{(1)} = \big(0, \frac{\ell^{(1)}}{\ell}\big]
\mbox{ and }
I^{(2)} = \left(\frac{\ell^{(1)}}{\ell}, 1\right).
$$
The concatenation
$
\b = \b^{(2)}\bullet \b^{(1)}
$
is the curve $\b : I\to X$ defined by
$$
\b(t) =
\begin{cases}
\b^{(1)}\left(\ell t + t_0^{(1)}\right)
&\mbox{ if }t\in I^{(1)}
\\
\b^{(2)}\left(\ell t - \ell^{(1)} + t_0^{(2)}\right)
&\mbox{ if }t\in I^{(2)}.
\end{cases}
$$
For $N > 2$ we define inductively
$$
\b^{(N)} \bullet \cdots \bullet \b^{(1)} =
\b^{(N)} \bullet \left(\b^{(N-1)} \bullet \cdots \bullet \b^{(1)}\right).
$$
In what follows we will also use the notation $I^{(i)}$ introduced
here (with obvious labels) to refer to the subintervals of $I$ 
where $\b$ agrees with $\b^{(i)}$ up to a reparametrization.

\subsubsection{Modifying the Parallel Transport}

The following lemma uses a straightforward way to modify the parallel transport
along a curve by adding a loop.

\begin{lemma}\label{gaubo}
Let $\b\in C^{\infty}(\o I, \S^2)$ be an immersed curve, let $\theta\in\R$
and let $V\subset\S^2$ be a neighbourhood of $\b(\o I)$. Then
there exists an immersed curve $\t\b\in C^{\infty}(\o I, V)$ which agrees with $\b$
in a neighbourhood of $\d I$ and which satisfies
$$
\Pi_{\t\b} = R_{\b(1)}^{(\theta)}\circ\Pi_{\b}.
$$
\end{lemma}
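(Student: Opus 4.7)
The strategy rests on the Gauss--Bonnet identity: since $K\equiv 1$ on $\S^2$, parallel transport along a smooth simple closed loop bounding a disc $D$ is rotation in the tangent plane at the basepoint by the angle $\mathrm{Area}(D)\bmod 2\pi$. It therefore suffices to splice into $\b$, at some interior point, a small closed loop based at $\b(t_\ast)$ with enclosed area $\equiv\theta\pmod{2\pi}$; conjugating the resulting holonomy by parallel transport out to $\b(1)$ then yields exactly $R_{\b(1)}^{(\theta)}$.

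Fix $t_\ast\in(0,1)$ and set $p=\b(t_\ast)$. Choose $\e_0>0$ so small that $\exp_p(\o{B_{2\e_0}(0)})\subset V$. For $\e\in(0,\e_0)$, translate the building block $\h\b$ of Section~\ref{Loop} so it passes through the origin, rotate it so its tangent at $s=0$ points along $\b'(t_\ast)$, rescale by $\e$, and push forward by $\exp_p$. This produces a smooth immersed loop $L_\e\subset V$ based at $p$ whose initial (hence also final) tangent direction is parallel to $\b'(t_\ast)$. It bounds a disc of area $A(\e)$, continuous in $\e$ with $A(\e)\to 0$ as $\e\to 0^+$ and $A(\e)>0$ for $\e>0$. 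For $m\in\N$ with $mA(\e_0)>2\pi$, the function $\e\mapsto mA(\e)$ sweeps an interval of length $>2\pi$, so we can choose $\e\in(0,\e_0)$ with $mA(\e)\equiv\theta\pmod{2\pi}$. Denote by $L$ the $m$-fold traverse of this $L_\e$; its holonomy at $p$ is $R_p^{(\theta)}$.

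The $C^0$-concatenation $\b|_{[0,t_\ast]}\bullet L\bullet\b|_{[t_\ast,1]}$, reparametrized onto $I$ via Section~\ref{Concatenation}, agrees with $\b$ near $\d I$; its one-sided tangent directions match at both junctions by construction, so the result is $C^1$ and immersed. The main obstacle is to upgrade this to a $C^\infty$ immersion. I would handle this by a cutoff-and-interpolation trick in the exponential chart at $p$: on a thin interval of width $\eta$ around each junction, replace the curve by a convex combination, weighted by a smooth cutoff, of the two adjoining smooth pieces, reparametrized to nonvanishing speed. Matching tangent directions ensure immersion is preserved for small $\eta$, and smoothness is immediate; the perturbation of enclosed area is $O(\eta)$ and is absorbed by a final small readjustment of $\e$ using the continuity of $A$. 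The resulting $\t\b\in C^\infty(\o I,V)$ is immersed, agrees with $\b$ near $\d I$, and its parallel transport factors as $\Pi_{\b|_{(t_\ast,1)}}\circ R_p^{(\theta)}\circ\Pi_{\b|_{(0,t_\ast)}}$; since parallel transport is an orientation-preserving isometry of oriented tangent planes, this equals $R_{\b(1)}^{(\theta)}\circ\Pi_\b$, as required.
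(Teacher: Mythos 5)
Your overall strategy is the same as the paper's: splice a small, $k$-fold traversed copy of the building block $\h\b$ into $\b$, use Gauss--Bonnet (Corollary \ref{corle}) to express the resulting extra holonomy as the enclosed area times the multiplicity, and tune the scale and multiplicity to hit $\theta$. The final conjugation $\Pi_{\b|_{(t_\ast,1)}}\circ R_p^{(\theta)}=R_{\b(1)}^{(\theta)}\circ\Pi_{\b|_{(t_\ast,1)}}$ is also how the paper (implicitly, via Corollary \ref{corle} applied at $\b(1)$) closes the argument. Where you depart from the paper is the choice of chart, and that is where your proof has a real gap.

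The paper first restricts $\b$ to a subinterval and then chooses a diffeomorphism $\Phi$ that \emph{straightens $\b$ into the line segment $t\mapsto te_1$}. The crucial consequence is that the flat tail of $\h\b_{\delta,k}$ (which equals $e_1$ near its endpoints by construction) matches the straightened $\b$ \emph{to all orders} at the junctions, so the concatenation $\o\b$ in \eqref{gaubo-1} is automatically $C^\infty$ and immersed with no further work; pulling back by $\Phi^{-1}$ preserves this. You instead work in exponential coordinates at one point $p=\b(t_\ast)$, where the pushed-forward loop's flat tail matches $\exp_p^{-1}\circ\b$ only to first order. You recognise that this forces an extra smoothing step, but the step you sketch is both incompletely specified and logically circular: you fix $\e$ so that $mA(\e)\equiv\theta\pmod{2\pi}$, then smooth (changing the enclosed area by $O(\eta)$), then propose to ``readjust $\e$'' --- but changing $\e$ changes the curve, which changes the smoothing, which changes the area again. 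To make this rigorous you would need to define the smoothed family continuously in $\e$ (for fixed $\eta$) and apply the intermediate value theorem to the \emph{smoothed} enclosed area, which is doable but strictly more delicate. The paper's device of straightening $\b$ before inserting the loop is precisely what makes all of this unnecessary; adopting it would turn your proof into the paper's.
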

\begin{proof}
After possibly restricting $\b$ to a subinterval,
we may assume that $\b(\o I)$ is contained in an open subset $U\subset\S^2$ and that there
exists a bounded $C^{\infty}$ diffeomorphism $\Phi : U\to\Phi(U)\subset \R^2$
such that 
$$
\Phi\circ\b = \b^{(1)}\mbox{ on }I,
$$
where $\b^{(1)} : I\to\R^2$ is defined by
$$
\b^{(1)}(t) = t e_1.
$$
Let $\delta > 0$, $k\in\N$ and let $\h\b$ be the building block introduced in Section
\ref{Loop}.
Let $\h\ell$ be as in Section \ref{Loop} and
define the immersed smooth curve $\h\b_{\delta, k} : [0, k\delta\h\ell] \to\R^2$
by setting 
$$
\h\b_{\delta, k}(t) = \delta\h\b\left(\frac{t}{\delta}\right) + \delta e_2 + \frac{1}{2} e_1.
$$
This is a $k$-fold parametrization of the boundary of a smoothened version of 
the square 
$$
(1/2-\delta, 1/2+ \delta)\times (0, 2\delta)\subset\R^2;
$$
we denote the `square' with rounded off corners
by $S_{\delta}$. The perimeter of $S_{\delta}$
is $\delta\h\ell$.
\\
We define $\o\b : I\to\R^2$ by
\begin{equation}
\label{gaubo-1}
\o\b = \b^{(1)}|_{(\frac{1}{2},1)}\bullet\h\b_{\delta,k}\bullet\b^{(1)}|_{(0, \frac{1}{2})}.
\end{equation}
By choosing $\delta$ small enough we can ensure that the trace of $\o\b$ is contained in $\Phi(U\cap V)$.
Notice also that $\o\b$ is $C^{\infty}$ up to the boundary of $I$
because so is its derivative, since 
$\h\b_{\delta, k}' = e_1$ in a neighbourhood of $0$.
\\
Since, up to a smooth reparametrization, $\o\b$ agrees with $\Phi\circ\b$
near the boundary of its interval of definition,
there is an immersed $C^{\infty}$ curve $\t\b : \o I\to\S^2$, obtained as a suitable
smooth reparametrization of $\Phi^{-1}\circ\o\b$, which agrees with $\b$ near $\d I$.
In particular,
$$
\t\b(0) = \b(0)\mbox{ and }\t\b(1) = \b(1).
$$
Denoting by $\t s_{\delta}$ the area of $\Phi^{-1}(S_{\delta})$,
we have
$$
\Pi_{\t\b} = R^{(k\t s_{\delta})}_{\b(1)}\circ\Pi_{\b},
$$
by Corollary \ref{corle} below.
Choosing $\delta$ and $k\in\N$ suitably, the rotation angle $k\t s_{\delta}$
can be arranged to take any value. 
\end{proof}

\subsection{Proof of Proposition \ref{accessible}}

Set $\b^{(0)} = \b(0)$.
Since $\o\gamma$ is contained in the interior of the convex hull of $\S^2\subset\R^3$,
there exists an $N\in\N$ and $\o r_1^{(1)}, ..., \o r_1^{(N)}\in\S^2$ such that
$\o\gamma$ lies in the interior of the convex hull of $\{\o r_1^{(1)}, ..., \o r_1^{(N)}\}$.
Set $\o r^{(0)}_1 = e_1$ and $\o r_1^{(N+1)} = \o r_1$. Choose a partition $0 = t_0 < t_1 < \cdots < t_N < t_{N+1} = 1$ of $I$.
\\
Let us first consider the constraint $a_{12} = 0$.
Choose, for example, a constant speed geodesic $\t\b : \o I\to\S^2$ with endpoints $\t\b(0) = e_3$
and $\t\b(1) = \o r_3$.
By Lemma \ref{gaubo} we can replace $\t\b$ on each interval $[t_i, t_{i+1}]$
with a curve $\b\in C^{\infty}([t_i, t_{i+1}])$ in such a way that
$$
\Pi_{\b|_{(t_i, t_{i+1})}}\o r_1^{(i)} = \o r_1^{(i+1)}\mbox{ for all }i = 0, ..., N.
$$
By Lemma \ref{gaubo}
we also have $\b = \t\b$ near each $t_i$. Hence $\b$ is 
smooth on all of $[0, 1]$.
Define the frame $r : I\to SO(3)$ by
setting $r_3 = \b$ and, for $j = 1, 2$,
\begin{equation}\label{indu-1}
r_j(t) = \Pi_{\b|_{(0, t)}}e_j\mbox{ for all }t\in I.
\end{equation}
Then $r$ is smooth and satisfies $r_1'\cdot r_2 \equiv 0$; moreover, $r(1) = \o r$ and 
$r_1(t_i) = \o r_1^{(i)}$ for $i = 0, ..., N$. 
\\
In particular, $\o\gamma$ is contained in the interior of the convex 
hull of $r_1(I)$. So by Proposition \ref{convex}, after reparametrizing
$r$ smoothly, we can arrange that $\o\gamma = \int_0^1 r_1$.
\\
The case $a_{13} = 0$ is similar to the case $a_{12} = 0$, so
it remains to consider the case $a_{23} = 0$. Choose a smooth curve $\t\b : \o I\to\S^2$ 
with constant speed which satisfies $\t\b(t_i) = \o r_1^{(i)}$ for all $i = 0, ..., N+1$.
By Lemma \ref{gaubo} there exists a smooth curve $\b : I\to\S^2$ whose trace agrees
with $\t\b$ except for an additional small loop
and such that, moreover,
the frame $r : I\to SO(3)$ defined by
setting $r_1 = \b$ and by \eqref{indu-1} for $j = 2, 3$
satisfies $r(1) = \o r$.
Since, in particular, $\t\b(I)\subset \b(I)$ and $r_1 = \b$, we see that
$\o\gamma$ is contained in the interior of the convex hull
of $r_1(\o I)$. Hence we can use Proposition \ref{convex} as before.

\section{Smooth Approximation of Framed Curves}\label{Density}

As another application of the two stages, we will now prove the following result.

\begin{theorem}\label{sden}
For all $A\in L^2(I, \A)$
there exist $A_n\in C^{\infty}(\o I, \A)$ such that
$A_n\to A$ strongly in $L^2(I, \R^{3\times 3})$
and, moreover, the solutions
$r$, $r^{(n)}$ of $r' = Ar$ and
of $(r^{(n)})' = A_nr^{(n)}$ with $r(0) = r^{(n)}(0) = I$
satisfy
$$
r^{(n)}(1) = r(1)\mbox{ and }\int_0^1 r^{(n)}_1 = 
\int_0^1 r_1
\mbox{ for all }n\in\NN.
$$
\end{theorem}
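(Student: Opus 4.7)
The plan is to carry out the two-stage strategy of Section~\ref{Stages}. Writing $\{i,k,\ell\}=\{1,2,3\}$ with $\A=\{a_{k\ell}=0\}$ and setting $\b=r_i$, I note that $|\b'|^2=a_{ik}^2+a_{i\ell}^2=\tfrac12|A|^2$, so $A\in L^2(I,\A)$ is equivalent to $\b\in W^{1,2}(I,\S^2)$, and $A_n\to A$ in $L^2$ is equivalent to $\b_n:=r_i^{(n)}\to\b$ in $W^{1,2}$. By Proposition~\ref{propa}, prescribing the endpoint frame $r(1)$ is tantamount to prescribing $\b(1)\in\S^2$ together with the parallel transport $\Pi_\b:T_{\b(0)}\S^2\to T_{\b(1)}\S^2$, so the first stage reduces to constructing smooth curves $\b_n\in C^\infty(\o I,\S^2)$ with $\b_n(0)=e_i$, $\b_n(1)=\b(1)$, $\Pi_{\b_n}=\Pi_\b$, and $\b_n\to\b$ in $W^{1,2}$; the associated frame is then reparametrized to match the integral in a second stage.

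For the first stage I would mollify $\b$ in a local chart of $\S^2$ covering its trace, reproject to $\S^2$, and adjust near the endpoints so as to restore the boundary values, obtaining $\b_n^{(0)}\in C^\infty(\o I,\S^2)$ with the correct endpoints and $\b_n^{(0)}\to\b$ in $W^{1,2}$. By Lemma~\ref{reggi} the parallel transport depends continuously on the underlying curve in $W^{1,2}$, so $\Pi_{\b_n^{(0)}}$ differs from $\Pi_\b$ by a rotation of $T_{\b(1)}\S^2$ through an angle $\theta_n\to 0$. Applying Lemma~\ref{gaubo} with correction angle $-\theta_n$ at a point away from $\d I$, I would insert a small loop enclosing chart area of order $\theta_n$ and having perimeter of order $\sqrt{\theta_n}$; its contribution to $\|\b_n^{(1)\prime}\|_{L^2}$ vanishes as $n\to\infty$, so the modified smooth curve $\b_n^{(1)}$ still converges to $\b$ in $W^{1,2}$ and additionally satisfies $\Pi_{\b_n^{(1)}}=\Pi_\b$. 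Defining $r^{(n,0)}$ by $r^{(n,0)}_i=\b_n^{(1)}$ and $r^{(n,0)}_j(t)=\Pi_{\b_n^{(1)}|_{(0,t)}}e_j$ for $j\neq i$ gives a smooth frame with $r^{(n,0)}(0)=I$ and $r^{(n,0)}(1)=r(1)$, whose coefficient matrix $A_n^{(0)}\in C^\infty(\o I,\A)$ converges to $A$ in $L^2$; a Gronwall estimate then shows $r^{(n,0)}\to r$ uniformly on $\o I$.

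For the second stage I would apply Proposition~\ref{convex} to $r^{(n,0)}$ to produce a smooth positive $\eta_n$ with $\int_I\eta_n=1$ and $\int_0^1\t r^{(\eta_n)}_1=\int_0^1 r_1$, where $\t r^{(\eta_n)}$ is the $\eta_n$-reparametrization from Section~\ref{Convex Integration}. Reparametrization preserves both endpoint values and rescales the coefficient matrix by $1/\eta_n$, so choosing $\eta_n$ close to $1$ preserves $L^2$ convergence; this is possible because $\int_0^1 r_1^{(n,0)}\to\int_0^1 r_1$, requiring only a small correction to the uniform density. The main obstacle lies in the hypothesis of Proposition~\ref{convex}, which requires $\int_0^1 r_1$ to lie in the \emph{interior} of $\mathrm{conv}(r_1^{(n,0)}(\o I))$. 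Generically $\int_0^1 r_1$ lies in the interior of $\mathrm{conv}(r_1(\o I))$ and the uniform convergence then yields the inclusion for large $n$; in degenerate configurations (for instance $A\equiv 0$, or $r_1$ tracing a great circle) I would argue separately—either by taking $A_n\equiv A$, or by inserting further small holonomy-free loops via Lemma~\ref{gaubo} so as to enrich the range of $r_1^{(n,0)}$ enough that its convex hull captures the target in its interior, all while preserving the $L^2$ approximation.
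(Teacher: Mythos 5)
Your proposal adopts the paper's two-stage architecture, but replaces the paper's key ingredient—Proposition~\ref{approx}, which directly constructs a smooth $W^{1,2}$-approximant of $\b$ with prescribed endpoints \emph{and} parallel transport by working in a chart where the connection form is the explicit $h(\rho)\,d\p$ and correcting only the angular component via Lemma~\ref{simpl}—with a mollify-then-correct-holonomy scheme based on Lemma~\ref{gaubo}. That alternative is geometrically natural and could in principle be made to work, but it has several unfilled gaps. First, you tacitly assume the whole trace $\b(\o I)$ lies in one chart; for a general $W^{1,2}$ sphere curve this fails, and the localization-and-gluing step (which the paper handles via the boundary-constancy property~\eqref{approx-i} of Proposition~\ref{approx}) is essential and nontrivial to replicate in your setup. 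Second, Lemma~\ref{gaubo} is stated for \emph{immersed} $C^\infty$ curves, and a mollification of a general $W^{1,2}$ curve need not be immersed (indeed $\b$ may have long constancy intervals); you would need an extra perturbation before inserting the correcting loop. Third, Lemma~\ref{reggi} asserts only $W^{1,p}$ regularity of $t\mapsto\Pi_{\b|_{(0,t)}}v$; it does not give continuity of $\Pi_\b$ with respect to $\b$ in $W^{1,2}$, which you invoke—this is true and provable, but not by citing that lemma. Fourth, the quantitative claim that a loop of enclosed area $\theta_n$ and perimeter $\sim\sqrt{\theta_n}$ contributes negligibly to the $W^{1,2}$ norm after the reparametrized concatenation needs to be spelled out (including the effect of the global reparametrization on the unmodified pieces), though the scaling you sketch is plausible.

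On stage two you are more explicit than the paper (whose proof of Theorem~\ref{sden} addresses only the endpoint condition on $r$ and is silent about the integral constraint, apparently leaving that step implicit). But the issues you yourself flag are real and not fully resolved: Proposition~\ref{convex} as stated produces \emph{some} positive $\eta$, not one close to $1$, so preserving $A_n\to A$ in $L^2$ under the reparametrization needs a refined version of that proposition; and the degenerate configurations where $\int_0^1 r_1$ fails to lie in the interior of the convex hull of $r_1(\o I)$ (e.g., $A\equiv 0$, or $r_1$ valued in a great circle) are dispatched only with a sketch. In short: same global strategy, a genuinely different and more geometric first-stage argument, but several lemmas would have to be upgraded or re-proved before the proposal closes.
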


\paragraph{Remarks.}
\begin{enumerate}[(i)]
\item We will obtain Theorem \ref{sden} as an immediate 
corollary of Proposition \ref{approx} below, which is the main
result of this section.
\item A more detailed result was obtained in \cite{H-framed} by a completely different approach;
compare also \cite{BartelsReiter}.
However, the two viewpoints are not entirely unrelated. The 
nonlocal constraint on the $a_{ij}$ in \cite{H-framed}
appears here in the guise of the (trivial) Lemma \ref{simpl} below.
\end{enumerate}

\subsection{Smooth Approximation Preserving Parallel Transport}

Using the viewpoint of Section \ref{Parallel Transport}, we see
that the essential point in the proof of Theorem \ref{sden}
will be to smoothly approximate a given $W^{1,2}$ curve on $\S^2$
while preserving its endpoint and its parallel transport map. The
correct endpoint for $\gamma$ is then achieved by carrying out stage two.

\begin{proposition}\label{approx}
Let $\b\in W^{1,2}(I, \S^2)$. Then there exist $\b_n\in C^{\infty}(\o I, \S^2)$
such that $\b_n$ converges to $\b$ strongly in $W^{1,2}$ and, moreover,
for all $n\in\N$ we have
\begin{enumerate}[(i)]
\item \label{approx-i}
$\b_n$ is constant near $\d I$,
\item \label{approx-ii}
$\b_n(0) = \b(0)$ and $\b_n(1) = \b(1)$,
\item \label{approx-iii}
$\Pi_{\b_n} = \Pi_{\b}$.
\end{enumerate}
\end{proposition}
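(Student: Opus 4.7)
The plan is a three-stage construction. First I would build a smooth approximation $\hat\b_n \in C^\infty(\bar I, \S^2)$ satisfying (i) and (ii) and converging to $\b$ in $W^{1,2}$, but only approximately preserving the parallel transport. Then I would quantify the residual error as a rotation angle $\theta_n$ that tends to $0$. Finally, I would correct the error exactly by inserting a small smooth loop into $\hat\b_n$ inside the constant region near $t = 0$, exploiting the Gauss--Bonnet relation between enclosed area and holonomy on $\S^2$.

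For the first stage I would extend $\b$ by its boundary values to a slightly larger interval, mollify in $\R^3$, and renormalize back to $\S^2$ (well defined for small enough mollification scale since $|\b| \equiv 1$ and mollification preserves $L^\infty$ closeness). I would then precompose with a smooth monotone reparametrization $\phi_n : I \to I$ equal to $0$ on $[0, 1/n]$, to $1$ on $[1 - 1/n, 1]$, and close to the identity in between with $\phi_n' \to 1$ in a dominated fashion, so that the resulting $\hat\b_n$ is in $C^\infty(\bar I, \S^2)$, satisfies (i) and (ii), and converges in $W^{1,2}$ by dominated convergence. For the second stage I would invoke continuous dependence of the parallel transport ODE on $W^{1,1}$ data, essentially as in Lemma \ref{reggi}: since the Christoffel symbols are smooth and $\hat\b_n \to \b$ in $W^{1,2} \hookrightarrow W^{1,1}$, the transports $\Pi_{\hat\b_n} : T_{\b(0)}\S^2 \to T_{\b(1)}\S^2$ converge to $\Pi_\b$. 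Both are orientation-preserving linear isometries between the same pair of tangent planes, so $\Pi_\b = R^{(\theta_n)}_{\b(1)} \circ \Pi_{\hat\b_n}$ for a unique angle $\theta_n \to 0$.

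For the third stage I would fix a subinterval $[s_n, s_n + \mu_n] \subset (0, 1/n)$ inside the region where $\hat\b_n \equiv \b(0)$ and replace $\hat\b_n$ there with a smooth loop $L_n : [s_n, s_n + \mu_n] \to \S^2$ based at $\b(0)$ that bounds signed area $\theta_n$. The loop can be modelled on the building block of Section \ref{Loop}, smoothly reconnected to the constant trajectory via a reparametrization that kills its speed (and all derivatives) at the endpoints $s_n$ and $s_n + \mu_n$; this guarantees $C^\infty$ smoothness of the spliced $\b_n$. By Gauss--Bonnet the holonomy of $L_n$ on $\S^2$ is exactly $\theta_n$, so the parallel transport of $\b_n$ equals $\Pi_\b$, giving (iii). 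The isoperimetric inequality on $\S^2$ yields loop length $O(\sqrt{|\theta_n|})$, and parametrizing $L_n$ at roughly unit speed over $\mu_n = O(\sqrt{|\theta_n|})$ contributes $O(\sqrt{|\theta_n|})$ to $\int |\b_n'|^2$; combined with the $W^{1,2}$ convergence of $\hat\b_n$, this shows $\b_n \to \b$ in $W^{1,2}$.

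The main obstacle is the quantitative balance in the third stage. The correction must simultaneously (a) live in the tiny constant region near $\d I$, (b) produce exactly the holonomy $\theta_n$, and (c) vanish in the $W^{1,2}$ limit. The crucial ingredient is the scaling from Gauss--Bonnet: on $\S^2$ a loop bounding area $A$ has perimeter of order $\sqrt{A}$, so a small holonomy correction is automatically geometrically small. A secondary technical point is the $C^\infty$ smoothness at the loop--constant junctions, which requires a cutoff analogous to the role of $\rho' \in C_0^\infty$ in the building block of Section \ref{Loop}.
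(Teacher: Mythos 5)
Your proof takes a genuinely different route from the paper's. The paper works \emph{locally}: after subdividing $I$ so the image lands in a coordinate patch, it writes $\b = \Psi(\rho,\p)$, approximates $\rho$ and $\mu = \p'$ smoothly, and then \emph{analytically} corrects the angular velocity by a small $\h\mu\in C_0^\infty$ produced by Lemma~\ref{simpl}, so that the connection-form integral $\int_\b\omega = \int h(\rho)\,d\p$ is preserved exactly; Lemma~\ref{ptle} then gives $\Pi_{\b_n}=\Pi_\b$. Your version is \emph{global and geometric}: mollify, reparametrize to freeze near $\d I$, measure the residual holonomy angle $\theta_n\to 0$, and cancel it by splicing in a tiny Gauss--Bonnet loop inside the constant region. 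Conceptually this recycles the loop machinery the paper develops in Sections~\ref{Loop}--\ref{Accessible} rather than the analytic Lemma~\ref{simpl}, which is a perfectly reasonable tradeoff and perhaps more in the spirit of the first stage.

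Two points in your sketch need a bit more care than you give them, though neither is fatal. First, extension-by-constants, mollification in $\R^3$, and renormalization do \emph{not} automatically give $\hat\b_n(0)=\b(0)$: the convolution at $t=0$ averages values from both sides. To get exact endpoint matching you should mollify on an enlarged interval and then restrict/rescale so that the endpoints of $[0,1]$ fall into the regions where the mollified curve is identically $\b(0)$ resp.\ $\b(1)$ (alternatively, first reparametrize $\b$ itself to be constant near $\d I$, at negligible $W^{1,2}$ cost). This matters because $\Pi_{\hat\b_n}$ and $\Pi_\b$ must act between the \emph{same} pair of tangent planes before you can even speak of the rotation angle $\theta_n$. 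Second, the balance you assert — loop of perimeter $O(\sqrt{|\theta_n|})$ run over a time window $\mu_n = O(\sqrt{|\theta_n|})$ — only produces a vanishing $\int|\b_n'|^2$ contribution if $\mu_n$ can actually be accommodated in the constant region, and the constant region's length and the mollification scale (hence $\theta_n$) are a priori independent parameters. You should choose them independently: first pick a mollification scale giving a small $\theta$, then choose the length of the flat region to be, say, $\max(\sqrt{|\theta|}, o(1))$. Tying both to a single index $n$ as written risks $\sqrt{|\theta_n|}>1/n$. With these two adjustments spelled out, your argument is correct and self-contained.
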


To prove Proposition \ref{approx}, we recall
a well-known geometric fact.

\begin{lemma}\label{ptle}
Let $U\subset\S^2$ be open and let $(E_1, E_2)$ be a 
smooth tangent orthonormal frame field on $\o U$, and denote
by $\omega = E_2\cdot DE_1$ the corresponding connection form. Let
$\b$, $\t\b : I\to U$ be absolutely continuous and such that
$\b(0) = \t\b(0)$ and $\b(1) = \t\b(1)$.
Then
$\Pi_{\b} = \Pi_{\t\b}$ if and only if
$
\int_{\b}\omega = \int_{\t\b}\omega.
$
\end{lemma}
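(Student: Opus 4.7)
The plan is to reduce parallel transport along each of the two curves to a scalar first-order ODE for an angular coordinate measured against the global frame $(E_1,E_2)$, and then to compare the two resulting angles at the common endpoint.

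Concretely, for $v\in T_{\b(0)}\S^2$ I would write the parallel transport $V(t)=\Pi_{\b|_{(0,t)}}v$ in the tangent frame as
\[V(t)=\cos\theta(t)\,E_1(\b(t))+\sin\theta(t)\,E_2(\b(t))\]
for some absolutely continuous $\theta:I\to\R$ with $\theta(0)$ prescribed by $v$. Since the frame is orthonormal, the covariant derivatives $\D_{\b'}E_i$ are entirely encoded by $\omega(\b')$: one has $\D_{\b'}E_1=\omega(\b')E_2$ and $\D_{\b'}E_2=-\omega(\b')E_1$. Plugging these identities and the expansion of $V$ into the parallel transport equation $\D_{\b'}V=0$ and decomposing along $E_1,E_2$ yields the scalar ODE $\theta'(t)=-\omega(\b'(t))$ almost everywhere on $I$. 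Integrating gives
\[\theta(1)=\theta(0)-\int_\b\omega,\]
and the same computation applied to $\t\b$, started from the same initial angle, gives $\t\theta(1)=\theta(0)-\int_{\t\b}\omega$.

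Because $\b(0)=\t\b(0)$ and $\b(1)=\t\b(1)$, both parallel-transported vectors are expressed in the same orthonormal basis at both endpoints. Hence $\Pi_\b(v)=\Pi_{\t\b}(v)$ for every $v\in T_{\b(0)}\S^2$ iff $\theta(1)=\t\theta(1)$, which by the two formulas is equivalent to the equality $\int_\b\omega=\int_{\t\b}\omega$ (with angles identified modulo $2\pi$, as is natural).

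The calculation is elementary and I do not foresee a real obstacle. The only point worth double-checking is the low regularity: $\b,\t\b$ are merely absolutely continuous, so $\b',\t\b'\in L^1$, and since $\omega$ is smooth on $\o U$ one has $\omega(\b'),\omega(\t\b')\in L^1(I)$. The scalar ODE for $\theta$ still admits a unique absolutely continuous solution given by the explicit integral formula, so every step above is justified at the required level of regularity.
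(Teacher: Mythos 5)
Your proof takes essentially the same route as the paper: write $\Pi_{\b|_{(0,t)}}v$ as $\cos\theta\,E_1(\b)+\sin\theta\,E_2(\b)$, differentiate the parallel-transport equation to obtain the scalar ODE $\theta'=-\omega(\b')$, integrate, and compare endpoint angles for $\b$ and $\t\b$. Your parenthetical about identifying angles modulo $2\pi$ is a fair observation that the paper leaves implicit (strictly speaking only the ``if'' direction holds with exact equality of the integrals, and indeed that is the only direction the paper uses later).
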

\begin{proof}
Let $V_0$ be a unit vector in the tangent space to $\S^2$ at $\b(0)$ and denote
by $V(t) = \Pi_{\b|_{(0, t)}}V_0$ its parallel transport along $\b$. Since $V$ is absolutely continuous, 
there exists an absolutely continuous function $\p : I\to\R$ such that
$$
V(t) = E_1(\b(t))\cos\p(t) + E_2(\b(t))\sin\p(t)\mbox{ for all }t\in I.
$$
We have
$$
\d_t V = \p'\cdot (-\sin\p E_1(\b) + \cos\p E_2(\b)) + D_{\b'}E_1(\b)\cos\p + D_{\b'}E_2(\b)\sin\p.
$$
Since $V$ is parallel along $\b$, we have $E_1\cdot \d_t V = E_2\cdot\d_t V = 0$.
We conclude that
$$
\omega(\b)(\b') + \p' = 0\mbox{ almost everywhere on }I,
$$
Integration over $I$ yields
\begin{equation}
\label{ptle-7}
\int_{\b}\omega = \p(0) - \p(1).
\end{equation}
The assertion follows from this formula and the same formula for $\t\b$.
\end{proof}

The following is a standard consequence of formula \eqref{ptle-7}.
\begin{corollary}\label{corle}
Let $S\subset\S^2$ be a simply connected smoothly bounded domain with area $\theta$ and
let $\b\in C^1(\S^1, \S^2)$ be a simple positively oriented parametrization of its
boundary $\d S$. Then $\Pi_{\b} : T_{\b(0)}\S^2\to T_{\b(0)}\S^2$ is given
by $\Pi_{\b} = R_{\b(0)}^{(\theta)}$.
\end{corollary}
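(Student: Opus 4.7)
My approach is to combine the argument used inside the proof of Lemma \ref{ptle} with Stokes' theorem applied on $S$; this reduces the statement to (a special case of) Gauss--Bonnet on the round sphere. The main subtlety is not analytical but bookkeeping of orientations.

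\textbf{Setup of a global frame near $\bar S$.} Since $S$ is simply connected and smoothly bounded in $\S^2$, it is a proper subset, so I can choose a point $p\in\S^2\setminus\bar S$. The open set $U=\S^2\setminus\{p\}$ is diffeomorphic to $\R^2$ and therefore admits a smooth positively oriented tangent orthonormal frame field $(E_1,E_2)$; let $\omega=E_2\cdot DE_1$ denote its connection form on $U$. Since $\b(\S^1)=\d S\subset U$, the computation from the proof of Lemma \ref{ptle} applies to $\b$ regarded as a loop in $U$: for any unit vector $V_0\in T_{\b(0)}\S^2$, writing $V_0=\cos\p(0)\,E_1(\b(0))+\sin\p(0)\,E_2(\b(0))$ and setting $V(t)=\Pi_{\b|_{(0,t)}}V_0$, one gets $V(t)=\cos\p(t)\,E_1(\b(t))+\sin\p(t)\,E_2(\b(t))$ with $\p'+\omega(\b)(\b')=0$, hence (as in \eqref{ptle-7}) $\p(1)-\p(0)=-\int_\b\omega$.

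\textbf{Conclusion via Stokes.} Because $\b(1)=\b(0)$ and the frame is single-valued at this point, and because the positive orientation of $(E_1,E_2)$ gives $E_2(\b(0))=\b(0)\times E_1(\b(0))$, I can read off
\[
\Pi_\b V_0=V(1)=R_{\b(0)}^{(\p(1)-\p(0))}V_0=R_{\b(0)}^{(-\int_\b\omega)}V_0.
\]
Now Stokes' theorem applied to $\omega$ on $S$ (whose boundary is positively parametrized by $\b$) yields $\int_\b\omega=\int_S d\omega$, and the structure equation on $\S^2$ gives $d\omega=-K\,dA=-dA$ since the Gauss curvature of $\S^2$ is identically $1$. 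Therefore $-\int_\b\omega=\theta$, which gives $\Pi_\b=R_{\b(0)}^{(\theta)}$ as claimed.

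\textbf{Expected obstacle.} There is no real analytic difficulty: the existence of the global frame used above relies only on $\bar S$ being a proper subset of $\S^2$, and once it is in place, everything reduces to the one-line identity $d\omega=-dA$. The one place where care is needed is aligning the three orientation conventions at play -- the positive orientation of $\b$ as the boundary of $S$, the positive orientation of $(E_1,E_2)$, and the sign in the definition of $R_z^{(\theta)}$ -- so that the enclosed area appears with the correct sign in the final rotation angle.
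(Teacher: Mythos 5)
Your argument is correct and is precisely the ``standard consequence of formula \eqref{ptle-7}'' that the paper invokes without spelling out the details: use \eqref{ptle-7} on a frame defined on a punctured sphere containing $\bar S$, apply Stokes to $\omega$ on $S$, and invoke $d\omega=-K\,dA=-dA$. Your sign bookkeeping also checks out (e.g., with the paper's own frame $E_1=(h\cos\p,h\sin\p,-\rho)$, $E_2=(-\sin\p,\cos\p,0)$ one has $\omega=h\,d\p$, $d\omega=h'\,d\rho\wedge d\p=-\frac{\rho}{h}\,d\rho\wedge d\p=-dA$, consistent with your identity), so the enclosed area appears with the correct sign.
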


We will also use the following simple lemma, the proof of which is left to the reader.

\begin{lemma}\label{simpl}
For every nonconstant $h\in L^{\infty}(I)$ 
and all $\delta\in\R$ there exists a $\h\mu\in C_0^{\infty}(I)$
such that $\int\h\mu = 0$ and $\int h\h\mu = \delta$, as well as
$$
\|\h\mu\|_{L^{\infty}}\leq |\delta|\cdot\frac{4\|h\|_{L^{\infty}(I)}}{\sigma^2_h},
$$
where $\sigma_h^2 = \int_I h^2 - \left(\int_I h\right)^2$.
\end{lemma}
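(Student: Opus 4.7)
The plan is to construct $\h\mu$ as a scalar multiple of a smooth, compactly supported approximation of the mean-zero function $\psi := h - \bar h$, where $\bar h := \int_I h$. A short computation shows that $\psi$ satisfies $\int_I \psi = 0$ and $\int_I h\,\psi = \int_I \psi^2 = \sigma_h^2$, which is strictly positive because $h$ is nonconstant; moreover $\|\psi\|_{L^\infty(I)}\leq 2\|h\|_{L^\infty(I)}$. If $\psi$ happened to lie in $C_0^\infty(I)$, then $\h\mu := (\delta/\sigma_h^2)\,\psi$ would immediately satisfy all the requirements, and the resulting $L^\infty$ bound would even be sharper, with the constant $2$ in place of $4$. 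The additional factor of $2$ in the lemma allows enough slack to absorb the errors introduced by smoothing and enforcing compact support.

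To produce such a smooth compactly supported variant, I would extend $\psi$ by zero outside $I$, mollify, and then multiply by a smooth cutoff $\chi_\e \in C_0^\infty(I)$ that equals $1$ on $[\e, 1-\e]$. This yields $\t\psi_\e \in C_0^\infty(I)$ with $\|\t\psi_\e\|_{L^\infty} \leq \|\psi\|_{L^\infty}$ and $\t\psi_\e \to \psi$ in $L^2(I)$ as $\e \to 0$. The mean $\int_I \t\psi_\e$ is not exactly zero, so to restore that constraint I would fix any auxiliary bump $\phi_0 \in C_0^\infty(I)$ with $\int_I\phi_0 = 1$ and set $\psi_\e := \t\psi_\e - c_\e\,\phi_0$ with $c_\e := \int_I \t\psi_\e$. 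Since $c_\e \to 0$ and $h\in L^\infty\subset L^2$, Cauchy--Schwarz gives $\int_I h\,\psi_\e \to \sigma_h^2$, while $\|\psi_\e\|_{L^\infty}\leq \|\psi\|_{L^\infty} + |c_\e|\,\|\phi_0\|_{L^\infty} = 2\|h\|_{L^\infty(I)} + o(1)$.

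For $\e$ small enough, both $\int_I h\,\psi_\e > 0$ and the denominator in the $L^\infty$ estimate is controlled, so I would set $\h\mu := (\delta / \int_I h\,\psi_\e)\,\psi_\e$. By construction $\h\mu \in C_0^\infty(I)$ and $\int \h\mu = 0$, $\int h\,\h\mu = \delta$ exactly. The bound reduces to
$$\|\h\mu\|_{L^\infty(I)} \;=\; \frac{|\delta|\,\|\psi_\e\|_{L^\infty(I)}}{\left|\int_I h\,\psi_\e\right|} \;\leq\; \frac{|\delta|\,(2\|h\|_{L^\infty(I)} + o(1))}{\sigma_h^2 + o(1)},$$
and choosing $\e$ so small that the remainders are negligible (an elementary estimate verifies that the required smallness can indeed be achieved) yields the claimed inequality $\|\h\mu\|_{L^\infty}\leq |\delta|\cdot 4\|h\|_{L^\infty(I)}/\sigma_h^2$, with considerable slack. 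There is no real obstacle in this argument: the essential content is the identity $\int h\,(h-\bar h) = \sigma_h^2$, and the rest is a routine density argument, which is consistent with the lemma being left to the reader.
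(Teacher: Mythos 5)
The paper does not include a proof of this lemma (``the proof of which is left to the reader''), so there is no argument of the author's to compare yours against. Your proof is correct and, I would say, exactly the intended one. The essential observation is the identity $\int_I h(h-\bar h) = \sigma_h^2$ together with $\int_I(h-\bar h)=0$, so that $\psi = h - \bar h$ is the right direction; the smoothing and mean-correction are routine, and the factor $4$ in place of the sharp-looking $2$ is precisely the slack needed to absorb the cutoff and the mean-correcting bump. All the individual steps check out: $\sigma_h^2>0$ because $h$ is nonconstant; $\|\psi\|_{L^\infty}\leq 2\|h\|_{L^\infty}$; the cutoff-mollification satisfies $\|\t\psi_\e\|_{L^\infty}\leq\|\psi\|_{L^\infty}$ and converges in $L^2(I)$; $c_\e\to 0$ and hence the mean correction costs $o(1)$ in $L^\infty$; and $\int_I h\psi_\e\to\sigma_h^2>0$ by Cauchy--Schwarz (using $h\in L^2(I)$, since $|I|=1$). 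Choosing $\e$ small enough that $\|\psi_\e\|_{L^\infty}\leq \tfrac{8}{3}\|h\|_{L^\infty}$ and $\int_I h\psi_\e\geq\tfrac{2}{3}\sigma_h^2$ gives the stated constant exactly, so the argument is complete.
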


\begin{proof}[Proof of Proposition \ref{approx}]
We may subdivide the interval $I$ into finitely many subintervals and restrict the construction to each subinterval.
In fact, due to \eqref{approx-i} and \eqref{approx-ii},
we will be able to smoothly glue together the smooth approximations obtained separately
on each subinterval. Therefore, since $\b$ is continuous,
we may assume without loss of generality that $\b(\o I)$ is contained
in the intersection of the (relatively open) upper and right hemispheres;
in particular, it does not contain the north pole.
\\
We parametrize the relevant portion of $\S^2$ by
\begin{align*}
\Psi : (0, 1)\times (-\pi, \pi) &\to\S^2
\\
(\rho, \p) &\mapsto (\rho\cos\p, \rho\sin\p, h(\rho)),
\end{align*}
where $h(\rho) = \sqrt{1 - \rho^2}$. We define the orthonormal frame field $(E_1, E_2)$ by
\begin{align*}
E_1(\rho, \p) &= (h(\rho)\cos\p, h(\rho)\sin\p, -\rho) 
\\
E_2(\rho, \p) &= (-\sin\p, \cos\p, 0).
\end{align*}
The connection form $\omega = E_2\cdot DE_1$ is
$
\omega = h(\rho)\ d\p.
$
In view of our hypotheses on the range of $\b$,
there exist $\rho$, $\p\in W^{1,2}(I)$ such that $\b(t) = \Psi(\rho(t), \p(t))$ for all $t\in I$.
\\
Let $\e > 0$ be small and let $\t\rho\in C^{\infty}(\o I)$ be such that $\|\t\rho - \rho\|_{W^{1,2}(I)} < \e$
and such that $\t\rho = \rho(0)$ near $0$ and $\t\rho = \rho(1)$ near $1$.
Let us assume for the moment that $\rho$ is not constant. 
Then, with the notation from Lemma \ref{simpl}, we have $\sigma_{h\circ\rho} > 0$ and
\begin{equation}
\label{approx-5}
\sigma_{h\circ\t\rho} \geq \frac{1}{2}\sigma_{h\circ\rho}\mbox{ for all $\e > 0$ small enough.}
\end{equation}
Set $\mu = \p'$.
Let $\o\mu\in C_0^{\infty}(I)$ be such that $\|\o\mu - \mu\|_{L^2}<\e$
and $\int_I\o\mu = \int_I\mu$.
Define
$$
\delta = \int h(\t\rho)\o\mu - \int h(\rho)\mu.
$$
By Lemma \ref{simpl} there exists $\h\mu\in C_0^{\infty}(I)$ satisfying
$\int_I \h\mu = 0$ and
$\int h(\t\rho)\h\mu = - \delta$ as well as 
\begin{equation}
\label{approx-4}
\|\h\mu\|_{L^{\infty}}\leq C|\delta|.
\end{equation}
Here, the constant $C$ can be chosen to be independent of $\e$
because by \eqref{approx-5} the variance $\sigma_{h\circ\t\rho}$ 
is bounded away from zero, uniformly in $\e$ for all $\e$ small enough.
\\
The function $\t\mu = \o\mu + \h\mu$ satisfies 
\begin{equation}
\label{approx-1}
\int\t\mu = \int\mu
\end{equation}
as well as
\begin{equation}
\label{approx-2}
\int h(\t\rho)\t\mu = \int h(\rho)\mu.
\end{equation}
In view of \eqref{approx-1} and \eqref{approx-2}
and since $\delta$ (and therefore $\h\mu$, by \eqref{approx-4}) can be made arbitrarily small by choosing $\e$ small enough,
the functions $\t\p(t) = \p(0) + \int_0^t\t\mu$ and $\t\rho$ define a smooth
curve $\t\b = \Psi(\t\rho, \t\p)$ with the desired properties
(here $\t\b$ plays the role of $\b_n$ in the statement).
In fact, \eqref{approx-1} implies that $\t\p(1) = \p(1)$;
since $\t\rho$ agrees with $\rho$ in $0$ and in $1$ this implies
that $\t\b$ and $\b$ have the same endpoints. 
The compact support of $\t\mu$ ensures that $\t\p$ is constant near $0$ and $1$,
and since the same is true for $\t\rho$, we see that \eqref{approx-i} is satisfied.
Finally, \eqref{approx-2} implies that
$
\int_{\t\b}\omega = \int_{\b}\omega
$
and therefore $\Pi_{\t\b} = \Pi_{\b}$, by Lemma \ref{ptle}.
\\
It remains to consider the degenerate case when $\rho$ is constant. Then
$h(\rho)$ is constant as well. We set $\t\rho = \rho$ and observe that in this
situation \eqref{approx-2} follows from \eqref{approx-1}.
Therefore, we can simply take $\h\mu = 0$ in the earlier argument. 
\end{proof}

\subsection{Proof of Theorem \ref{sden}}

We only consider the case $a_{12} = 0$; the others are similar.
Set $\b = r_3$ and let $\b_n\in C^{\infty}(\o I, \S^2)$
as provided by Proposition \ref{approx}. Define
$r^{(n)}\in C^{\infty}(\o I, SO(3))$
by setting $r^{(n)}_3 = \b_n$ and, for $i = 1, 2$,
$$
r^{(n)}_i(t) = \Pi_{\b_n|_{(0, t)}}r_i(0)\mbox{ for all }t\in I.
$$
Then $(r_1^{(n)})'\cdot r_2^{(n)}\equiv 0$,
and for $i = 1,2$ we have
$$
a_{i3}^{(n)} = - r_i^{(n)}\cdot (r_3^{(n)})'\to a_{i3}\mbox{ strongly in }L^2(I),
$$
because $r^{(n)}\to r$ uniformly and $\b_n'\to\b'$ strongly in $L^2$.

\section{Ribbons with Finite Width}\label{Ribbons}

The purpose of this section is
to prove the following version of Proposition \ref{accessible}
for finite ribbons, i.e., ribbons with nonzero width. Such finite ribbons
are more rigid than infinitesimal ribbons (i.e., framed curves).

\begin{theorem}\label{ribbon-accessible}
For all $\o r\in SO(3)$ and all $\o\gamma\in B_1(0)$ there is a constant $w > 0$
and a $C^{\infty}$ isometric immersion
$$
u : \o I\times [-w, w]\to\R^3
$$
satisfying $u(0, 0) = 0$
as well as
$
u(1, 0) = \o\gamma
$
and, for $i = 1, 2$,
\begin{align*}
\d_i u &= e_i\mbox{ on }\{0\}\times (-w,w)
\\
\d_i u &= \o r_i\mbox{ on }\{1\}\times (-w,w).
\end{align*}
\end{theorem}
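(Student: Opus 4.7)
The strategy is to realize $u$ as a developable (flat) surface containing the prescribed midline as a geodesic, building the midline via the framed-curve machinery of Section \ref{Accessible} and then extending it transversally via the rulings of the developable surface.

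For the midline, I would apply Proposition \ref{accessible} with the homogeneous constraint $a_{12}\equiv 0$, obtaining a smooth frame $r : \o I \to SO(3)$ with $r(0) = I$, $r(1) = \o r$ and base curve $\gamma(t) = \int_0^t r_1$ satisfying $\gamma(1) = \o\gamma$. The vanishing of $a_{12}$ means that, on any flat surface whose unit normal along $\gamma$ equals $r_3$, the curve $\gamma$ is a geodesic -- the condition under which $\gamma$ corresponds to the straight line $\{s = 0\}$ in an orthonormal (hence isometric) parametrization by a flat strip. To secure the flat-rectangle boundary conditions $\d_i u = e_i$ on $\{0\}\times(-w,w)$ and $\d_i u = \o r_i$ on $\{1\}\times(-w,w)$, I would carry out the construction of Proposition \ref{accessible} on a slightly enlarged interval and glue constant pieces $r\equiv I$, $r\equiv \o r$ near the ends; equivalently, arrange that the spherical image $\b := r_3$ is constant (and hence $r$ is constant) in neighbourhoods of $\d I$.

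For the extension, the decisive geometric input is a bound on the geodesic curvature of $\b$ on $\S^2$. Along the midline the shape operator $S$ of the surface has, in the basis $(r_1, r_2)$, first column $(a_{13}, a_{23})^T$; flatness $\det S = 0$ forces the ruling direction at $\gamma(t)$ to lie in $\ker S$, namely the unit vector $\ell(t)\in\Span(r_1(t), r_2(t))$ orthogonal to $\b'(t) = -a_{13}(t) r_1(t) - a_{23}(t) r_2(t)$. In the flat strip $\o I\times(-w, w)$ the ruling through $(t, 0)$ becomes the straight line whose direction is determined by $\ell(t)$, and its rate of turn is, up to bounded factors, precisely the geodesic curvature $\k_\b$ of $\b$ on $\S^2$ (in the sense of Definition \ref{darb-def2}). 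For $w$ smaller than a universal multiple of $\sup_{\o I}|\k_\b|^{-1}$ these rulings foliate the strip diffeomorphically, and propagating the midline along them yields a $C^\infty$ isometric immersion with $\d_i u(t, 0) = r_i(t)$ by construction.

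The main obstacle, and the only genuinely new ingredient beyond Sections \ref{Accessible}--\ref{Density}, is the uniform bound on the geodesic curvature of $\b$: the small loops inserted by Lemma \ref{gaubo} to realize short parallel-transport rotations generically have high geodesic curvature on $\S^2$. The remedy is to fix a building block in Section \ref{Loop} of absolute size, and to realize large rotations by increasing the iteration count $k$ rather than by shrinking $\delta$. This yields a bound $|\k_\b|\leq K$ with $K$ depending only on $\o r$ and $\o\gamma$, hence a positive lower bound on the admissible half-width $w$. Once this quantitative refinement of Proposition \ref{accessible} is in hand, the extension by rulings is a routine computation in the classical theory of developable surfaces.
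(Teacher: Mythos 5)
Your high-level framework matches the paper's: obtain the midline from the constraint $a_{12}\equiv 0$, then extend transversally by the rulings of a developable strip (the paper's Proposition \ref{localiso} applied with $b(t)=te_1$). You also correctly identify that the issue is controlling the geodesic curvature of $\b=r_3$ on $\S^2$. But there is a genuine gap in the proposed remedy, and it cannot be repaired within your strategy.

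The quantity that must be controlled is not the pointwise geodesic curvature $\kappa_g$ but its primitive $K_g(t)=\int_0^t\kappa_g$. Looking at the determinant of the ruling parametrization, $\det(\d_t\Phi\,|\,\d_s\Phi)=\cos K_g(t)-s\,\kappa_g(t)$, one sees that $\Phi$ degenerates \emph{on the midline itself} ($s=0$) as soon as $|K_g|$ reaches $\pi/2$; a bound $\sup|\kappa_g|\le K$ with arbitrarily large $\int\kappa_g$ buys you nothing. Your fix --- fix the size $\delta$ of the smooth square loop $\h\b_{\delta,k}$ and increase the winding count $k$ --- does bound $\kappa_g$, but it makes $K_g$ blow up. Indeed, any smooth immersed closed loop on $\S^2$ contained in a small chart has tangent turning number $\geq 1$, so by Gauss--Bonnet $\int_{\text{loop}}\kappa_g=2\pi k-k\,s$, where $s$ is the enclosed spherical area; the parallel-transported frame only rotates by the holonomy $ks$, so $K_g$ jumps by $k(2\pi-s)\approx 2\pi k$ across the loop. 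Even a single fixed-size loop ($k=1$) already pushes $K_g$ past $\pi$, destroying injectivity of the ruled parametrization at width zero. This is a topological obstruction (turning-number theorem); it cannot be evaded by rescaling or re-iterating the smooth building block of Section \ref{Loop}.

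The paper sidesteps this obstruction in Proposition \ref{careful loop} and Lemma \ref{carefull} by abandoning smooth immersed loops in favour of a degenerate piecewise-geodesic bigon: go along a meridian to the antipode, pause, return along a nearby meridian, pause. The geodesic pieces have $\kappa_g\equiv 0$, and $K_g$ is adjusted only during the stationary pieces, so it can be kept within $[-\theta/2,\theta/2]$ for a small $\theta$; the holonomy $2\theta$ per bigon is accumulated by iterating bigons, not by winding. This gives the bound $|K_g|\le\pi/4$ recorded in Proposition \ref{ribaccess} \eqref{ribaccess-0}, which is exactly what feeds condition \eqref{loci-1} of Proposition \ref{localiso}. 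The price is that $\b$ is no longer a smooth immersion, which is why the paper works with the notion of adapted frames (Definition \ref{bdgeo}) and the reparametrization device of Section \ref{Piecewise} rather than with immersed curves.
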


The idea of the proof will be similar to that of Proposition \ref{accessible},
in that we combine parallel transport and with a suitable reparametrization of the frame.
However, here we need to satisfy several additional conditions.
Nevertheless, the construction only makes use of piecewise geodesics and
is elementary geometric.
So it contrasts sharply with the abstract approach in \cite{H-framed}.
\\
The proof of Theorem \ref{ribbon-accessible} will be obtained 
by combining facts about the Gauss map of isometric immersions
with a result involving only framed curves.

\subsection{Gauss Map of Isometric Immersions}

Let $S\subset\R^2$ be a bounded domain.
By the Gauss map of an immersion $u : S\to\R^3$
we mean the map from the reference domain $S\subset\R^2$ into $\S^2$
given by
$$
n(x) = \frac{\d_1 u(x)\times \d_2 u(x)}{|\d_1 u(x)\times \d_2 u(x)|}.
$$
If $u$ is an isometric immersion $S\to\R^3$ (where $S$ is
endowed with the standard flat metric),
i.e., $\d_i u\cdot\d_j u = \delta_{ij}$, then the denominator is always $1$.
For $W^{2,2}$ isometric immersions the usual formulae for smooth immersion remain
true, cf. \cite{FJM2}. In particular, we have that
\begin{equation}
\label{h2n}
\d_i\d_j u(x) = h_{ij}(x) n(x)\mbox{ for almost every }x\in S,
\end{equation}
where $h_{ij}$ are the components of the second fundamental form.
Now let $b : I\to S$ be a Lipschitz curve and set $r_i(t) = \d_i u(b(t))$
for $i = 1, 2$ and $r_3(t) = n(b(t))$. Then \eqref{h2n} shows that
$$
r_1'(t)\cdot r_2(t) = 0.
$$
Hence $r_{1}$ and $r_2$ are parallel transported along the
spherical curve $\b : I\to\S^2$ defined by $\b = r_3$.

\begin{proposition}\label{bdcurv}
Let $S\subset\R^2$ be a bounded domain, let $u\in W^{2,2}_{\delta}(S)$
and let $b : \o I\to S$ be a $W^{2,\infty}$-curve
that is parametrized by arclength.
Then $n\circ b : I\to \S^2$ has bounded geodesic curvature.
\end{proposition}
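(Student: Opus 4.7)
The plan is to construct an adapted frame $r : I \to SO(3)$ for $\b = n \circ b$ in the sense of Definition \ref{bdgeo} and show that its tangential rotation rate $r_1' \cdot r_2$ is essentially bounded. The natural starting point is the frame pulled back from the surface: set $r_i^*(t) = \d_i u(b(t))$ for $i = 1, 2$ and $r_3^*(t) = n(b(t)) = \b(t)$; this is orthonormal because $u$ is an isometric immersion. By the identity $\d_i\d_j u = h_{ij} n$ from \eqref{h2n}, each derivative $(r_i^*)'(t) = \sum_j h_{ij}(b(t))\, b_j'(t)\, n(b(t))$ is parallel to $r_3^*$, hence orthogonal to both $r_1^*$ and $r_2^*$.

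The triple $(r_1^*, r_2^*, r_3^*)$ is usually not adapted to $\b$, since $\b' = Dn(b)\, b'$ need not be parallel to $r_1^*$. I therefore rotate the orthonormal pair $(r_1^*, r_2^*)$ within the tangent plane to $\S^2$ by a measurable angle $\theta$ chosen so that the new $r_1$ points along $\b'$. Decomposing $\b' = \a\, r_1^* + \gamma\, r_2^*$, pick $\theta$ with $(\a, \gamma) = |\b'|(\cos\theta, \sin\theta)$ on $\{\b' \neq 0\}$ (and extend measurably), and set
\[
r_1 = \cos\theta\, r_1^* + \sin\theta\, r_2^*, \quad r_2 = -\sin\theta\, r_1^* + \cos\theta\, r_2^*, \quad r_3 = r_3^*.
\]
The frame $r$ is adapted to $\b$ by construction, and differentiating $r_1$ while using that $(r_i^*)' \perp r_2$ yields the clean identity $r_1' \cdot r_2 = \theta'$. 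The proposition thus reduces to showing that $\theta$ admits a representative in $W^{1,\infty}(I)$.

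Controlling $\theta'$ is the main obstacle, and this is where the structure of $W^{2,2}_{\delta}$ isometric immersions enters crucially. Since such $u$ is developable, $Dn$ has rank at most one almost everywhere, and one may write $Dn(x) = v(x) \otimes w(x)$ with $v(x)$ a unit tangent vector to $\S^2$ at $n(x)$ and $w(x) \in \R^2$. Hence $\b'(t) = (w(b(t)) \cdot b'(t))\, v(b(t))$, so the direction of $\b'$ — and therefore $\theta$ up to a choice of sign — is governed solely by $v \circ b$. Using that $b$ is $W^{2,\infty}$ and parametrized by arclength, together with the local regularity of the asymptotic ruling field provided by the $W^{2,2}_{\delta}$ structure, the composition $v \circ b$ is Lipschitz away from the zeros of $(w \circ b) \cdot b'$. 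On subintervals where $\b' \equiv 0$ the curve $\b$ is locally constant and a constant adapted frame suffices; on the remaining pieces we fix the sign of $v$ coherently and then invoke Lemma \ref{adaptedlocal} to patch the resulting finite cover into a single adapted frame on $I$ with $r_1' \cdot r_2 \in L^{\infty}(I)$.
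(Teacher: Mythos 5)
Your construction is essentially the same as the paper's: you pull back the coordinate frame $r_i^* = \partial_i u(b)$, note it is parallel transported along $\b$, rotate it by an angle $\theta$ to align $r_1$ with $\b'$, and reduce to showing $\theta \in W^{1,\infty}(I)$. The paper carries out exactly this rotation, but phrases it in the domain: it takes the Lipschitz ruling field $q : B_\delta(x_0)\to\S^1$ furnished by the fine structure of $W^{2,2}_\delta$ developable immersions, sets $\t R_2 = q\circ b$, $\t R_1 = -\t R_2^\perp$, and defines $\t r_i = D_{\t R_i}u(b)$; then $\t r_1'\cdot\t r_2 = \t R_1'\cdot\t R_2$, which is bounded because $q$ and $b$ are Lipschitz.

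There is, however, a genuine gap in your version. The claim that ``$v\circ b$ is Lipschitz'' is false. Writing $v = \pm\,\D u\, q^\perp$, the derivative $(v\circ b)'$ splits into a component tangent to $u(S)$, which is indeed bounded (it comes from $(q^\perp\circ b)'$), and a component proportional to $n(b)$, which is only $L^2$, since it involves the second fundamental form $h_{ij}(b)$, and $u$ is merely $W^{2,2}$. So $v\circ b$ is not Lipschitz in general. What \emph{is} Lipschitz is the rotation angle $\theta$ of $v(b)$ with respect to the parallel frame $(r_1^*,r_2^*)$, because $\bigl(v(b)\cdot r_1^*,\, v(b)\cdot r_2^*\bigr) = \pm q^\perp(b)$ and $q$, $b$ are Lipschitz; the normal-to-surface component of $(v\circ b)'$ drops out of $\theta'$ because $v(b)\perp n(b)$. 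Your argument should pass through the domain ruling field $q$ to make this precise, rather than asserting regularity of $v\circ b$ directly. A secondary issue: the set where $\b'=0$ (equivalently, where $b'$ is along a ruling or $Dn(b)=0$) need not be a finite union of subintervals, so ``on subintervals where $\b'\equiv 0$ use a constant frame'' plus Lemma \ref{adaptedlocal} does not cover all cases; working with $q$, which is defined and Lipschitz near every point irrespective of degeneracy, sidesteps this.
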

\begin{proof}
Clearly $\b = n\circ b$ is in $W^{1,2}$ because so is $n$.
In view of Lemma \ref{adaptedlocal} we
must show that, in the neighbourhood of every point in $\o I$,
the spherical curve $\b$ admits a $W^{1,1}$ adapted frame $\t r$ 
for which $\t r_1'\cdot\t r_2$ is bounded.
\\
For simplicity let us recall that $u\in C^1(S)$.
Moreover, by the results in \cite{HartmanNirenberg, Kirchheim, Pakzad, H-ARMA1},
for every $x_0\in S$ there exists $\delta > 0$ and a
Lipschitz continuous map $q : B_{\delta}(x_0)\to\S^1$
such that, for all $x\in B_{\delta}(x_0)$,
\begin{equation}
\label{bdcurv-1}
\D u \mbox{ is constant on the segment }B_{\delta}(x_0)\cap \left(x + \R q(x)\right);
\end{equation}
observe that this implies that $n$ is constant on these segments, too.
\\
Now let $t_0\in\o I$ and let $q$ be as above, with $x_0 = b(t_0)$.
Set $\t R_2 = q\circ b$, which is well-defined in a neighbourhood $J$ of $t_0$
in $\o I$. Defining $\t R_1 = -\t R_2^{\perp}$ and, for $i = 1, 2$,
$$
\t r_i = (D_{\t R_i}u)(b)\mbox{ on }J
$$
as well as $\t r_3 = \b$, we obtain an adapted frame $\t r$ for $\b$.
To verify that in fact $\b'$ is parallel to $\t r_1$, we notice that $\b' = D_{b'}n$,
so by symmetry of the Weingarten map we have
$$
\t r_2\cdot \b' = D_{b'}u\cdot D_{\t R_2}n = 0,
$$
in view of \eqref{bdcurv-1}.
\\
Moreover, $\t r_1'\cdot\t r_2 = \t R_1'\cdot\t R_2$
due to \eqref{h2n} and because $\D u\in O(2,3)$. And
$\t R_1'$ is bounded because $q$ is Lipschitz. Hence the adapted frame has bounded curvature.
\end{proof}

The next result asserts that, conversely to Proposition \ref{bdcurv},
given a spherical curve $\b$ with bounded geodesic curvature
and a curve $b$ in the reference domain, one can construct a local
isometric immersion whose Gauss map satisfies $n\circ b = \b$.
This allows us to pass from framed curves to ribbons with finite width.
We refer, e.g., to
\cite{FHMP-SIAM} for a related result.
The data here are different, however, as we work directly
with the Gauss map.

\begin{proposition}\label{localiso}
Let $b : \o I\to\R^2$ be a $W^{2,\infty}$ embedding parametrized by arclength
and let $\b\in W^{1,2}(I, \S^2)$ have geodesic curvature
$\kappa_g\in L^{\infty}(I)$, in the sense of Definition \ref{darb-def2}.
Set $\kappa = b''\cdot (b')^{\perp}$
and assume that
\begin{equation}
\label{loci-1}
\left|
\int_0^t (\kappa - \kappa_g)
\right| < \frac{\pi}{2}\mbox{ for all }t\in\o I.
\end{equation}
Then there exists an open neighbourhood $U$ of $b(I)$
and an isometric immersion $u\in W^{2,2}_{\delta}(U)$ satisfying 
$
n\circ b = \b\mbox{ on }I.
$
\\
More precisely,
assume that $b(0) = 0$ and $b'(0) = e_1$, set
$K_g(t) = \int_0^t\kappa_g$ and define
\begin{equation}
\label{defPhi}
\begin{split}
\Phi : I\times\R &\to\R^2
\\
(t, s) &\mapsto b(t) + s
\zwo{-\sin K_g(t)}{\cos K_g(t)}. 
\end{split}
\end{equation}
Then there exists $\e > 0$ such that $\Phi$ is invertible
on $I\times (-\e, \e)$. 
\\
Define $K(t) = \int_0^t\kappa$, let
$\t r$ be an adapted frame for $\b$ with $\t r_1'\cdot\t r_2 = \kappa_g$
and define $r : I\to SO(3)$ by setting $r_3 = \b$ and
\begin{equation}
\label{localiso-11}
r_1 = R_{\b}^{(K-K_g)}\t r_1.
\end{equation}
Then the map
$u : \Phi(I\times (-\e, \e))\to\R^3$ defined by
\begin{equation}
\label{localiso-defu}
u\left(\Phi(t,s)\right) = s\t r_2(t) + \int_0^t r_1
\mbox{ for all }(t, s)\in I\times (-\e, \e)
\end{equation}
belongs to $W^{2,2}_{\delta}\left(\Phi(I\times (-\e, \e))\right)$.
Its differential satisfies
\begin{equation}
\label{localiso-Du}
(\D u)(\Phi) = \t r_1\otimes
\zwo{\cos K_g}{\sin K_g}
+
\t r_2\otimes
\zwo{-\sin K_g}{\cos K_g}
\mbox{ on }I\times (-\e, \e)
\end{equation}
and its second fundamental form satisfies
\begin{equation}
\label{localiso-A}
A(\Phi) = \frac{-\b'\cdot\t r_1}{\cos(K - K_g) - s\kappa_g}\ \t R_1
\mbox{ on }I\times (-\e, \e).
\end{equation}
\end{proposition}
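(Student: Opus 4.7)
\emph{Step 1: Invertibility of $\Phi$.} The plan is to verify the explicit formulas in the statement sequentially. Since $b$ is parametrized by arclength with $b'(0) = e_1$, we have $b'(t) = (\cos K(t), \sin K(t))$, and a direct computation gives
\[
\det D\Phi(t, s) = \cos(K(t) - K_g(t)) - s\kappa_g(t).
\]
By \eqref{loci-1}, $\cos(K - K_g)$ is bounded below by a positive constant on $\o I$, and since $\kappa_g \in L^\infty$, the Jacobian is bounded away from zero on $\o I \times (-\e, \e)$ for $\e$ small. Combining this local injectivity with the fact that $b$ is an embedding of the compact set $\o I$, via a standard tubular-neighbourhood argument, yields global injectivity on $I \times (-\e, \e)$ after possibly shrinking $\e$.

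\emph{Step 2: Adapted frame identities.} By Lemma \ref{adaptedunique}, $\b' = (\b' \cdot \t r_1)\t r_1$, so $\t r_2 \cdot \b' = 0$. Differentiating $\t r_2 \cdot \b \equiv 0$ gives $\t r_2' \cdot \b = -\t r_2 \cdot \b' = 0$; combining this with $\t r_1' \cdot \t r_2 = \kappa_g$ produces
\[
\t r_2' = -\kappa_g \t r_1
\]
almost everywhere. This identity is the one subtle input to the construction: without it, $\d_t(u \circ \Phi)$ would carry a spurious $\b$-component and \eqref{localiso-defu} would fail to define an isometric immersion.

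\emph{Step 3: Chain-rule verification and Gauss map.} Writing $M(t, s)$ for the right-hand side of \eqref{localiso-Du}, one computes $M \cdot \d_s \Phi = \t r_2 = \d_s(u \circ \Phi)$. Decomposing $\d_t \Phi = (\cos K, \sin K) - s\kappa_g (\cos K_g, \sin K_g)$ in the orthonormal basis $\{(\cos K_g, \sin K_g), (-\sin K_g, \cos K_g)\}$ and using \eqref{localiso-11} to expand $r_1 = \cos(K - K_g)\t r_1 + \sin(K - K_g)\t r_2$, one gets $M \cdot \d_t \Phi = r_1 - s\kappa_g \t r_1 = r_1 + s\t r_2' = \d_t(u \circ \Phi)$, where the last equality uses Step 2. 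Since $\Phi$ is a diffeomorphism, this establishes \eqref{localiso-Du}. The columns of $M$ are a rotation of $(\t r_1, \t r_2)$ within the plane they span, hence an orthonormal pair in $\R^3$, so $u$ is an isometric immersion, and its Gauss map satisfies $n \circ \Phi = \t r_1 \times \t r_2 = \t r_3 = \b$, proving in particular $n \circ b = \b$ on $I$.

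\emph{Step 4: Second fundamental form and regularity.} Differentiating $\t r_1 \cdot \b \equiv 0$ yields $\t r_1' \cdot \b = -\b' \cdot \t r_1$. Projecting the second derivatives of $u \circ \Phi$ onto $n \circ \Phi = \b$ via \eqref{h2n} and transferring through $D\Phi^{-1}$ produces the scalar $-\b' \cdot \t r_1$ in the numerator and the Jacobian from Step 1 in the denominator, giving \eqref{localiso-A}; the components involving $\d_s$ either vanish or are forced by the rank-one constraint on $A$ (consistent with the Gauss equation for a flat isometric immersion). The $W^{2,2}_\delta$ regularity is inherited from the regularity of $b$ and $\t r$ through the explicit formula \eqref{localiso-defu} and the Lipschitz regularity of $\Phi^{-1}$. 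The main obstacle of the whole argument is confined to Step 2; once $\t r_2' = -\kappa_g \t r_1$ is in hand, every subsequent step is elementary.
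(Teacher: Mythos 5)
Your proof follows essentially the same route as the paper's: compute $\det D\Phi$ to establish invertibility, verify \eqref{localiso-Du} via the chain rule, derive \eqref{localiso-A} from \eqref{h2n}, and conclude. Your Step 2, isolating the identity $\t r_2' = -\kappa_g\t r_1$, is a nice piece of exposition — the paper uses this fact silently when passing from $\d_t(u\circ\Phi) = s\t r_2' + r_1$ to $r_1 - s\kappa_g\t r_1$, and it is indeed exactly the adapted-frame torsion-free condition \eqref{adaptedlocal-torsion} combined with $\t r_1'\cdot\t r_2 = \kappa_g$.

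The weak point is Step 4. Two places need to be made concrete. First, the claim that ``the components involving $\d_s$ either vanish or are forced by the rank-one constraint'' should be replaced by the direct observation that the right-hand side of \eqref{localiso-Du} is independent of $s$, hence $D_{\t R_2}\D u = 0$; invoking the Gauss equation for flat immersions here is circular, since $W^{2,2}_{\delta}$ membership (which is what makes the Gauss equation available) is part of what you are trying to prove. Second, the regularity sentence is misleading as stated: $u\circ\Phi$ is \emph{not} $W^{2,2}$ in $(t,s)$, because $\d_t^2(u\circ\Phi)$ contains the term $s\t r_2'' = -s(\kappa_g\t r_1)'$ and $\kappa_g$ is only $L^\infty$, so ``regularity is inherited through the Lipschitz change of variables'' does not go through in the naive reading. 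What does work, and what the paper actually does, is to observe that $\D u\circ\Phi$ (not $u\circ\Phi$) is Lipschitz in $s$ and $W^{1,2}$ in $t$ — using $\t r_1' = \kappa_g\t r_2 + \mu\b$ with $\mu = -\b'\cdot\t r_1\in L^2$ — and then bound $\int_U|\D^2 u|^2$ by a change of variables using the positive lower bound on $\det D\Phi$; the paper phrases this as the estimate $\int_U|A|^2\leq\delta^{-1}\int_I|\b'|^2$. With those two corrections the argument is complete.
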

\begin{proof}
We assume without loss of generality that $b(0) = 0$ and $b'(0) = e_1$.
Define $R : I\to SO(2)$ by $R = e^{iK}$, where $K(t) = \int_0^t\kappa$.
So $R_1 = b'$.
\\
By hypothesis there exists an adapted frame
$\t r$ for $\b$ which
has geodesic curvature $\kappa_g\in L^{\infty}(I)$.
Since $\b\in W^{1,2}$, the function $\mu = -\b'\cdot\t r_1$
is in $L^2$. By Lemma \ref{adaptedunique} \eqref{adaptedunique-i} we have
\begin{equation}
\label{localiso-mu}
\b' = -\mu\t r_1.
\end{equation}
Set $\t R = e^{iK_g}$, where $K_g(t) = \int_0^t\kappa_g$.
Define $\Phi$ as in \eqref{defPhi}, i.e.,
$\Phi = b + s\t R_2$. Using $b' = R_1$ and $\t R_2'\cdot\t R_1 = -\kappa_g$, we compute
\begin{equation}
\label{localiso-au1}
\d_t\Phi = b' + s\t R_2' = \left(R_1\cdot\t R_1 - s\kappa_g\right)\t R_1 + (R_1\cdot\t R_2)\t R_2.
\end{equation}
Since
$
R_1\cdot\t R_1 = \cos\left(K - K_g\right),
$
we have
\begin{equation}
\label{loci-det}
\det\left(\d_t\Phi\ |\ \d_s\Phi \right) = \cos\left(K - K_g\right) - s\kappa_g.
\end{equation}
Since $b$ has bounded curvature, it is not hard to see that
there is an $\e > 0$ such that $\Phi$ is injective on $I\times (-\e, \e)$
and that $U = \Phi\left(I\times (-\e, \e)\right)$ is open.
We define $u : U\to\R^3$ by formula \eqref{localiso-defu}, for all $(t,s)\in I\times (-\e, \e)$.
\\
Taking the derivative with
respect to $s$ we see that
$
(D_{\t R_2}u)(\Phi) = \t r_2.
$
Using this and \eqref{localiso-au1}
as well as $r_2\cdot r_1' = \kappa$ 
(from \eqref{localiso-11}) and $R_1' = \kappa R_2$,
we find
\begin{align*}
(R_1\cdot\t R_1 - s\kappa_g)(D_{\t R_1} u)(\Phi)
+ (R_1\cdot\t R_2) \t r_2 
&= \d_t (u\circ\Phi) = r_1 - s\kappa_g\t r_1
\\
&=
\left(r_1\cdot\t r_1 - s\kappa_g\right)\ \t r_1 + (r_1\cdot\t r_2)\ \t r_2.
\end{align*}
By \eqref{localiso-11} we have $r_i\cdot\t r_j = R_i\cdot\t R_j$, so
we conclude
$
(D_{\t R_1}u)(\Phi) = \t r_1.
$
Summarizing,
\begin{equation}
\label{localiso-Du11}
(\D u)(\Phi) = \t r_1\otimes\t R_1
+
\t r_2\otimes\t R_2\mbox{ on }I\times (-\e, \e),
\end{equation}
which is \eqref{localiso-Du}.
The right-hand side of \eqref{localiso-Du}
is in $O(2,3)$, so indeed $u$ is an isometric immersion.
\\
Taking the derivative with respect to $s$ in \eqref{localiso-Du11}
we find
\begin{equation}
\label{localiso-Du1}
(D_{\t R_2}\D u)(\Phi) = 0 \mbox{ on }I\times(-\e, \e).
\end{equation}
Taking the derivative with respect to $t$ in \eqref{localiso-Du11},
using \eqref{localiso-Du1} and $\t R_2' = -\kappa_g\t R_1$
as well as $\t r_1'\cdot\b = \mu$ (which follows from \eqref{localiso-mu}),
we find
\begin{equation}
\label{localiso-Du2}
\left(R_1\cdot\t R_1 - s\kappa_g\right) (D_{\t R_1}\D u)(\Phi) =
\mu\b\mbox{ on }
I\times (-\e, \e).
\end{equation}
Now \eqref{localiso-A} follows from
\eqref{localiso-Du1}, \eqref{localiso-Du2} 
and \eqref{h2n}. And \eqref{localiso-A}
implies
\begin{equation}
\label{localiso-A2}
|A(\Phi)|^2 = \frac{|\b'|^2}{\left(\cos(K_g - K) - s\kappa_g\right)^2}
\mbox{ on }
I\times (-\e, \e).
\end{equation}
For any small $\delta > 0$, by choosing $\e > 0$ small enough
we may assume that $I\times (-\e, \e)$ is contained in
$$
M_{\delta} = \{(t, s) :
\cos\left(K(t) - K_g(t)\right) - s\kappa_g(t) \geq\delta\}.
$$
Hence
\begin{align*}
\int_U |A|^2 &\leq \int_{\Phi(M_{\delta})} |A|^2
=
\int_{M_{\delta}} |A(\Phi)|^2 |\det\D\Phi|
\\
&= \int_{M_{\delta}} \frac{|\b'|^2}{\cos (K_g - K) - s\kappa_g}
\leq\frac{1}{\delta} \int_I |\b'|^2,
\end{align*}
by definition of $M_{\delta}$. 
Since $\b\in W^{1,2}$, indeed $u\in W^{2,2}(U)$.
\end{proof}

\subsection{Framed Curves for Finite Ribbons}

We will obtain Theorem \ref{ribbon-accessible} as a consequence
of Proposition \ref{localiso} (applied with $\kappa \equiv 0$) 
and of the following result involving
only framed curves.

\begin{proposition}\label{ribaccess}
For all $\o r\in SO(3)$ and all $\o\gamma\in B_1(0)$ there 
exists 
a framed curve $(\gamma, r)$ in $C^{\infty}(\o I)$ 
with $r_1'\cdot r_2 \equiv 0$
and such that the following are satisfied:
\begin{enumerate}[(i)]
\item \label{ribaccess-i}
$r(0) = I$ and $r(1) = \o r$, as well as $\gamma(0) = 0$
and $\gamma(1) = \o\gamma$.
\item \label{ribaccess-ii}
$r_3 : I\to\S^2$ has bounded geodesic curvature.
More precisely, there exists a function $K_g\in C_0^{\infty}(I)$ with 
\begin{equation}
\label{ribaccess-0}
|K_g| \leq \frac{\pi}{4}\mbox{ on }I
\end{equation}
such that
\begin{equation}
\label{ribaccess-1}
\t r_1 = R_{r_3}^{(K_g)}r_1
\end{equation}
defines an adapted frame $\t r$ along $r_3 : I\to\S^2$.
\end{enumerate}
\end{proposition}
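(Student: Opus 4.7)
The plan is to construct the framed curve $(\gamma, r)$ in two stages, mirroring the strategy of Proposition \ref{accessible} but with additional care devoted to controlling $K_g$. In the first stage I build the spherical curve $\b = r_3 : \o I \to \S^2$ with suitable geometric properties and recover $r_1, r_2$ by parallel transport, setting $r_i(t) = \Pi_{\b|_{(0,t)}} e_i$ for $i = 1, 2$, so that the constraint $r_1' \cdot r_2 \equiv 0$ holds automatically. In the second stage I use Proposition \ref{convex} to reparametrize and achieve the endpoint condition $\gamma(1) = \o\gamma$.

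Before constructing $\b$, I reformulate the conditions it must satisfy. The compact support of $K_g$ in $I$ forces $\b$ to coincide with a geodesic in neighbourhoods of $0$ and $1$. Since $K_g$ equals the signed angle in the tangent plane from $r_1$ to $\b'/|\b'|$, the vanishing of $K_g$ at $t=0$ and $t=1$ translates into the tangent conditions $\b'(0) \parallel e_1$ and $\b'(1) \parallel \o r_1$. Moreover, $K_g(1) = 0$ yields $r_1(1) = \b'(1)/|\b'(1)| = \o r_1$, and since parallel transport is an orientation-preserving isometry of tangent planes, it follows that $r_2(1) = \o r_3 \times \o r_1 = \o r_2$ as well. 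Hence the construction reduces to producing a smooth $\b : \o I \to \S^2$ connecting $e_3$ to $\o r_3$, with prescribed tangents at both ends, with $|K_g| \leq \pi/4$ throughout, and with $K_g$ vanishing near $\d I$.

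I would build $\b$ as a concatenation of geodesic arcs with smoothed corners: on each geodesic arc $\kappa_g \equiv 0$, so $K_g$ is locally constant, and at each smoothed corner $K_g$ jumps by the turn angle, which I constrain to lie in $[-\pi/4, \pi/4]$. Unlike in Proposition \ref{accessible}, the spherical loops supplied by Lemma \ref{gaubo} are not available here: a loop forces $\b'/|\b'|$ to rotate through nearly $2\pi$ relative to the parallel-transported $r_1$, which would violate the bound on $K_g$. Instead, any required correction of the holonomy is obtained by means of a \emph{bump}: the curve leaves the reference geodesic with turn $+\a$, traverses a new geodesic direction for a prescribed arclength, then rejoins the geodesic with turn $-2\a$, and after further arclength turns $+\a$ back to the original direction, where $|\a| \leq \pi/4$. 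By Corollary \ref{corle}, the change in parallel transport produced by such a bump equals the signed area it encloses, which depends continuously on the detour length and can be prescribed as needed (possibly concatenating several bumps for larger corrections). Combining an initial geodesic arc in direction $e_1$, a finite sequence of bumps yielding the required holonomy correction, and a terminal geodesic arc with tangent $\o r_1$ joining the last bump to $\o r_3$, I arrive at the desired $\b$.

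The main obstacle is the combinatorial engineering of this piecewise construction so that all conditions hold simultaneously for every $(\o r_3, \o r_1)$: matching the endpoint, matching the final tangent direction, and maintaining $|K_g| \leq \pi/4$ together with the compact-support condition on $K_g$. A secondary concern is that Proposition \ref{convex} requires $\o\gamma$ to lie in the interior of the convex hull of $r_1(\o I)$; this can be arranged by inserting additional auxiliary bumps that steer $r_1$ through a configuration of points with $\o\gamma$ in the interior of their convex hull, exploiting that $B_1(0) = \mathrm{int}\,\mathrm{conv}(\S^2)$. Once $\b$ satisfies all the above, setting $r_3 = \b$ and $r_i = \Pi_{\b|_{(0,\cdot)}} e_i$ yields a smooth frame satisfying (i) and (ii), and Proposition \ref{convex} then supplies a smooth reparametrization $\eta$ with $\int_0^1 \t r_1^{(\eta)} = \o\gamma$, completing the proof.
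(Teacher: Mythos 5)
Your two-stage strategy matches the paper's, and you correctly identify both that $K_g$ vanishing near $\d I$ forces $r_1(1)=\o r_1$ and hence $r(1)=\o r$, and that the small-loop construction of Lemma \ref{gaubo} cannot be used here because it makes $K_g$ sweep out nearly $2\pi$. However, the building block you propose has a gap that prevents the argument from closing.

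The bump you describe is a geodesic detour with turns $+\a,-2\a,+\a$ which is supposed to return to the original geodesic and to the original tangent direction. Track $K_g$ across one such bump, starting from $K_g=0$: it jumps to $\a$, then (after the middle turn) to roughly $-\a$, and after the last turn it does not return to $0$ but to the signed enclosed area $A$. This is unavoidable: the whole point of the bump is to produce a holonomy rotation by $A$, and since the tangent at the end of the bump points in the original direction while $r_1$ has been parallel transported around the detour, the angle between them --- which is $K_g$ --- has permanently shifted by $A$. (Equivalently, Gauss--Bonnet forbids a nontrivial geodesic detour on $\S^2$ whose three turning angles sum to zero.) Consequently, concatenating bumps to build up a total holonomy correction $\phi$ leaves $K_g\approx\phi$ at the end, so the constraint $|K_g|\leq\pi/4$ caps the accessible total correction at about $\pi/4$; but accessing an arbitrary $\o r$ requires correcting the holonomy by an arbitrary angle. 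The paper's Lemma \ref{carefull} resolves exactly this: the curve $\b$ runs along one meridian to the antipode and back along another, \emph{pausing} at the poles. During a pause $\b'\equiv 0$, so the adapted-frame condition $\b'\times\t r_1=0$ imposes nothing, and $K_g$ can be smoothly slid from $+\theta/2$ to $-\theta/2$ (and eventually back to $0$) at no geometric cost, while the loop as a whole contributes a holonomy rotation by $2\theta$. This reset mechanism, tied to constant pieces of $\b$, is what your bump lacks; once it is supplied one is essentially rebuilding Lemma \ref{carefull} and Proposition \ref{careful loop}. The remaining part of your argument --- choosing the auxiliary directions $\o r_1^{(i)}$ so that $\o\gamma$ lies in the interior of the convex hull of $r_1(\o I)$ and then invoking Proposition \ref{convex} --- coincides with the paper's.
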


Proposition \ref{ribaccess} will be proven in Section \ref{Ribaccess} below.
Taking it for granted, we can prove Theorem \ref{ribbon-accessible}.

\begin{proof}[Proof of Theorem \ref{ribbon-accessible}]
Define $b : I\to\R^2$ by $b(t) = t e_1$, so that $\kappa = b''\cdot (b')^{\perp} \equiv 0$.
Let $\gamma$, $r$, $\t r$ and $K_g$ be as in Proposition \ref{ribaccess} and set $\kappa_g = K_g'$.
Then in view of \eqref{ribaccess-0} we have \eqref{loci-1}
and by \eqref{ribaccess-1}
we have \eqref{localiso-11}.
\\
Hence we can apply Proposition \ref{localiso}. We
conclude that there exists $\e > 0$ such that,
defining $\t R = e^{iK_g}$ and $\Phi(t, s) = te_1 + s\t R_2(t)$
as well as $U = \Phi(I\times (-\e, \e))$,
the map $u : U\to\R^3$ defined by
\begin{equation}
\label{ribbon-accessible-7}
u\left(\Phi(t, s)\right) = \gamma(t) + s\t r_2(t)
\end{equation}
is well-defined and
belongs to $W^{2,2}_{\delta}(U)$. And it satisfies
\begin{equation}
\label{ribbon-accessible-8}
(\D u)(\Phi) = \t r_1\otimes\t R_1 + \t r_2\otimes\t R_2.
\end{equation}
Observing that $K_g$ has compact support, we see that
$\t R = I$ and
$\t r = r$ in a neighbourhood of $\d I$. Hence $\Phi$ is the identity
in a neighbourhood of $\d I\times\R$. Moreover, \eqref{ribaccess-0}
ensures that $e_2\cdot\t R_2\geq\frac{1}{\sqrt{2}}$ on $I$.
Hence there exists $w > 0$ such that $I\times [-w, w]\subset U$.
\\
Combining \eqref{ribbon-accessible-7} and \eqref{ribbon-accessible-8}
with the boundary conditions satisfied by $\gamma$ and by $r$,
we see that $u$ and $\D u$
satisfy the asserted boundary conditions.
\end{proof}

We now turn to the proof of Proposition \ref{ribaccess}.

\subsubsection{Smooth Parametrization of Piecewise Smooth Curves}
\label{Piecewise}

Let $\t\b\in C^0([-1,1], X)$ be a curve in a smooth manifold $X$
and assume that $\t{\b}$ consists of two
smooth pieces, e.g.,
$
\t{\b}|_{[-1, 0]}\in C^{\infty}\left([-1, 0]\right)
$
and
$
\t{\b}|_{[0,1]}\in C^{\infty}\left([0,1]\right).
$
Then $\t \b$ can easily be reparametrized such that the reparametrized map is smooth.
In order to find such a smooth parametrization of $\t{\b}$, let
$\rho\in C^{\infty}(\R)$ satisfy the following conditions:
\begin{enumerate}[(i)]
\item $\rho(s) = \rho(-s)$ for all $s\in\R$.
\item $\rho > 0$ on $\R\setminus\{0\}$
\item there is an $\e\in (0, 1/4)$ such that $\rho$ is constant on $(\e, \infty)$.
\item $\rho$ itself and all of its derivatives are zero in $0$.
\item $\int_0^1\rho = 1$.
\end{enumerate}
Define $\b : [-1,1]\to\S^2$ by setting
$$
\b(t) = \t{\b}\left(\int_0^t\rho\right)
\mbox{ for all }t\in [-1,1].
$$
Then $\b\in C^{\infty}([-1, 1], X)$ is a smooth reparametrization of 
the piecewise smooth curve $\t{\b}$.

\subsubsection{Piecewise Geodesic Loop}

The following construction plays a similar role as
the one in Section \ref{Loop}.
While the curve constructed here is no longer a smooth immersion,
it satisfies additional constraints
which are essential for Proposition \ref{ribaccess}.

\begin{proposition}\label{careful loop}
Let $\a_0\in (0, \pi/4)$, let  $x$, $y\in\S^2$, 
and let $v^{(0)}\in T_x\S^2$, $v^{(1)}\in T_y\S^2$ be unit tangent vectors.
\\
Then there exists a continuous piecewise geodesic curve $\b : \o I\to \S^2$
with $\b(0) = x$ and $\b(1) = y$ 
such that the following are satisfied:
\begin{enumerate}[(i)]
\item \label{careful-i}
$\b$ is a $C^{\infty}$ immersion away from
a finite set $I'\subset I$ and
the cardinality $\# I'$
is bounded by a constant depending only on $\a_0$.
\item \label{careful-ii}
There is a function $K_g\in C_0^{\infty}(I)$
with $|K_g|\leq\a_0$ such that the frame $\t r : \o I\to SO(3)$
defined by setting $\t r_3 = \b$ and
\begin{equation}
\label{carefull-1}
\t r_1(t) = R^{(K_g(t))}_{\b(t)}\Pi_{\b|_{(0, t)}}v^{(0)}
\mbox{ for all }t\in\o I
\end{equation}
is an adapted frame for $\b$.
\item \label{careful-iii}
$\Pi_{\b}v^{(0)} = v^{(1)}$.
\end{enumerate}
\end{proposition}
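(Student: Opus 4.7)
I plan to construct $\beta$ as a concatenation of a bounded number of geodesic arcs separated by brief pauses on which $\beta$ is stationary and $K_g$ is free to vary smoothly, mirroring the two-stage philosophy of Proposition \ref{accessible}. First I produce a piecewise geodesic backbone linking $x$ to $y$ with initial tangent direction $\pm v^{(0)}$ and final tangent direction $\pm v^{(1)}$; then I correct the holonomy by attaching a small geodesic polygon whose enclosed area supplies the rotation needed to align $\Pi_{\beta} v^{(0)}$ with $v^{(1)}$, by Corollary \ref{corle}.

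The key mechanism is that on each open geodesic arc of $\beta$, both $\beta'$ and the transported vector $V(t) := \Pi_{\beta|_{(0,t)}} v^{(0)}$ are parallel along $\beta$, so the angle $\theta_i$ between them is constant on the $i$-th arc. The adapted-frame requirement $\beta'\times\tilde r_1 = 0$ then forces $K_g$ to be piecewise constant, with value $K_g^{(i)} \equiv -\theta_i \pmod{\pi}$ on arc $i$. Across a pause, $K_g$ smoothly interpolates between $K_g^{(i-1)}$ and $K_g^{(i)}$, and the difference $K_g^{(i)} - K_g^{(i-1)}$ equals the turning angle of the piecewise geodesic at the $i$-th corner. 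Since $|K_g|\leq\alpha_0 < \pi/4$, each pause can effect a turn of at most $2\alpha_0$; larger turns are decomposed into cascades of small turns concentrated at a single point, using at most $\lceil\pi/\alpha_0\rceil$ extra pauses per cascade. Compact support of $K_g$ is ensured by matching the endpoint tangent directions to $v^{(0)}$ and $v^{(1)}$, which forces $K_g^{(0)} = K_g^{(N)} = 0$.

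The holonomy correction works similarly: after the backbone, the discrepancy between $\Pi_{\beta_0} v^{(0)}$ and $v^{(1)}$ is encoded by a rotation in $T_y\S^2$ by some angle $\psi\in[0,2\pi)$, and I append to the backbone a geodesic polygon enclosing oriented area $\psi$, whose corners are again refined into small turns. The main obstacle is the combinatorial bookkeeping: I must show that the total number of corners remains bounded by a constant depending only on $\alpha_0$ uniformly in $x, y, v^{(0)}, v^{(1)}$, that $|K_g|\leq\alpha_0$ is maintained globally during all transitions, and that the smooth reparametrization of the piecewise geodesic in the spirit of Section \ref{Piecewise} arranges $\beta$ to be a $C^\infty$ immersion away from a finite set $I'$ as required by (i). Compactness of $\S^2$ and of the holonomy group $SO(2)$ guarantees that the desired uniform bound exists.
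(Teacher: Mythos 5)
Your plan has the right high-level architecture---piecewise geodesic arcs separated by pauses during which $\b$ is constant and $K_g$ is free to vary, Gauss--Bonnet for the holonomy, and smooth reparametrization as in Section \ref{Piecewise}---and this matches the paper's proof at that level. The genuine gap lies in the mechanism you propose for maintaining $|K_g|\leq\a_0$: the ``cascade of small turns concentrated at a single point'' does not resolve large turning angles, and this is precisely the step that would fail.

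The reason is that on each geodesic arc, $K_g$ is not free: it is determined modulo $\pi$ by the angle $\theta_i$ between $\b'$ and the transported vector $V(t)=\Pi_{\b|_{(0,t)}}v^{(0)}$, through the adaptedness relation $R^{(K_g)}_{\b}V=\pm\b'/|\b'|$. Hence $|K_g|\leq\a_0$ forces $\theta_i$ to lie within $\a_0$ of a multiple of $\pi$ on \emph{every} nondegenerate arc. Now suppose the corner must turn through an angle close to $\pi/2$. Subdividing this into many small turns with short arcs between them makes the cumulative angle $\theta_i$ pass through values near $\pi/4$, which lie outside $[-\a_0,\a_0]\cup[\pi-\a_0,\pi+\a_0]$ because $\a_0<\pi/4$; collapsing the intermediate arcs to zero length simply recreates a single forbidden corner. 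The proposed holonomy-correcting polygon runs into the same obstruction: around any small geodesic polygon the cumulative turning angle sweeps through all of $[0,2\pi)$, so $\theta_i$ cannot remain within $\a_0$ of $\pi\Z$. The appeal to compactness of $\S^2$ and $SO(2)$ is suggestive but does not supply the required uniform bound.

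What the paper uses, and what is absent from your plan, is the antipodal device. In Lemma \ref{carefull} the loop is routed down one meridian to the south pole and back up a nearby meridian, so the single interior corner has a turning angle close to $\pi$ rather than close to $0$. Since an adapted frame only requires $\b'\times\t r_1=0$, a turn near $\pi$ is exactly as admissible as a turn near $0$: the angle $\theta_i$ flips from near $0$ to near $\pi$, and $K_g$ correspondingly passes through $0$ and changes sign within $[-\a_0,\a_0]$. The longitude separation $\theta$ of the two meridians then plays a double role, controlling both $|K_g|=\theta/2$ and the holonomy $R^{(2\theta)}$ via the lune area. The iteration in the paper subdivides the \emph{target holonomy} into pieces with $\theta\leq 2\a_0$, each realized by a thin lune; it is not a subdivision of the corner angle. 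Without the antipodal routing, no such subdivision can keep $|K_g|\leq\a_0$.
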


\paragraph{Remarks.}
\begin{enumerate}[(i)]
\item The conclusion
of Proposition \ref{careful loop} is invariant under reparametrizations
of $\b$: if $\Psi : I\to I$ is smooth and strictly monotone, then
$\b\circ\Psi$ and $K_g\circ\Psi$
satisfy the conclusions, too.
The function $K_g\circ\Psi$ still belongs to $C_0^{\infty}(I)$.
\item In view of
Section \ref{Piecewise} we can find a $\Psi$ such that $\b\circ\Psi$
is in $C^{\infty}(\o I)$ and thus also $\t r$, defined
by \eqref{carefull-1} with $\b\circ\Psi$ instead of $\b$, is smooth.
\end{enumerate}

We begin by verifying that Proposition
\ref{careful loop} can be reduced to the particular case when
$x = y$.

\begin{lemma}\label{carefull}
Proposition \ref{careful loop} is true in the case $x = y$.
\end{lemma}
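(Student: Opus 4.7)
The plan is to realise the holonomy $\Pi_\beta v^{(0)} = v^{(1)}$ via Corollary~\ref{corle}: letting $\theta \in [0, 2\pi)$ be the counterclockwise angle from $v^{(0)}$ to $v^{(1)}$ in $T_x\S^2$, it suffices to construct a positively oriented simple loop at $x$ enclosing area $\theta$ on $\S^2$. The natural piecewise geodesic building block is a spherical bigon (lune) from $x$ to $-x$ with dihedral angle $\alpha$ at each vertex; its two bounding great semicircles are geodesics and it encloses area $2\alpha$. If the outgoing semicircle is chosen to leave $x$ along the current parallel transport of $v^{(0)}$, then on this piece the tangent coincides with $\Pi v^{(0)}$, so the adapted frame formula \eqref{carefull-1} is realised with $K_g \equiv 0$. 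A direct computation on the returning semicircle shows that the tangent makes a constant angle $\pi - \alpha$ with the parallel transport of $v^{(0)}$; modulo $\pi$ this is $\alpha$, so $K_g \equiv -\alpha$ satisfies the formula on that piece as well.

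Choose $N := \lceil \pi/\alpha_0 \rceil$ and $\alpha := \theta/(2N) \leq \alpha_0$, and concatenate $N$ such lunes $L_1, \dots, L_N$, arranged so that $L_i$ leaves $x$ in the direction $v_i := R_x^{((i-1)\theta/N)} v^{(0)}$ (the accumulated parallel transport up to that stage). Each lune contributes holonomy $R^{2\alpha} = R^{\theta/N}$, so the composition equals $R^\theta$, and by Corollary~\ref{corle} we obtain $\Pi_{\tilde\beta} v^{(0)} = v^{(1)}$. The resulting piecewise geodesic curve $\tilde\beta$ has at most $2N$ vertices (the points $-x$ inside each lune and the points $x$ between consecutive lunes), a number depending only on $\alpha_0$; on each of its geodesic segments $K_g$ takes one of the two constant values $0$ or $-\alpha$, both within $[-\alpha_0, \alpha_0]$, and the jump at each vertex is at most $\alpha < \alpha_0$.

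To meet the smoothness and support requirements $\beta \in C^\infty(\bar I \setminus I')$ and $K_g \in C_0^\infty(I)$, I apply the reparametrization scheme of Section~\ref{Piecewise} at each vertex: this yields a curve $\beta$ which is a $C^\infty$ immersion away from the finite set $I'$ of vertex preimages, with $\#I'$ depending only on $\alpha_0$. The main technical obstacle is the smooth interpolation of $K_g$ across these vertices, where adjacent geodesic pieces carry different constant values. Since $\beta'$ becomes infinitely flat at each vertex preimage, on a small neighbourhood of such a point the adapted frame constraint $\beta' \times \tilde r_1 = 0$ is effectively unrestricted, so $K_g$ may be smoothly cut off between adjacent constants without ever exceeding $\alpha_0$ in absolute value. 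With these choices the three conclusions \eqref{careful-i}--\eqref{careful-iii} of Proposition~\ref{careful loop} are verified in the case $x = y$.
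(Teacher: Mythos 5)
Your plan follows the same strategy as the paper's: realize the required rotation by concatenating spherical lunes (bigons bounded by two meridians), subdividing the rotation angle so each lune's dihedral angle stays below $\a_0$, and use Corollary~\ref{corle} (area = holonomy angle) to count the parallel transport. The paper places each lune symmetrically about the angular bisector of $v^{(0)}$ and $v^{(1)}$ (so $K_g$ takes the values $\pm\theta/2$), while you place it so the outgoing meridian points along the accumulated $\Pi_\b v^{(0)}$ (so $K_g\in\{0,-\a\}$); this is a cosmetic difference and both variants yield $|K_g|\le\a_0$.

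However, your handling of the transitions of $K_g$ across the vertices contains a genuine gap. You argue that because the reparametrized $\b'$ vanishes to infinite order at each vertex preimage $t^*$, the adapted-frame constraint $\b'\times\t r_1=0$ is ``effectively unrestricted'' near $t^*$ and so $K_g$ can be cut off smoothly there. This is not so. The constraint is required almost everywhere, and the set $\{\b'\ne 0\}$ is open, so on each punctured neighbourhood of $t^*$ the constraint is active; there it forces $\t r_1 = R_{\b}^{(K_g)}\Pi_\b v^{(0)}$ to be parallel to $\b'$, which pins $K_g$ (mod $\pi$) to a constant value on each geodesic piece. Since the tangent direction of your piecewise geodesic jumps by $\a\notin\pi\Z$ across the vertex, a smooth $K_g$ cannot interpolate: reparametrizing so that $\b'(t^*)=0$ at the single point $t^*$ does nothing to relax the constraint on either side of $t^*$. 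The paper avoids this precisely by inserting genuine \emph{constant} pieces of $\b$ at each vertex (the segments $\b^{(0)},\b^{(2)},\b^{(4)}\equiv\pm e_3$ in the paper's proof): on such an interval $\b'\equiv 0$, the adapted-frame condition is vacuous, and $K_g$ may be interpolated smoothly between the two required constants while staying within $[-\a_0,\a_0]$. Your argument goes through once you add such pauses at every vertex (at $-x$ and at the intermediate returns to $x$) and let $K_g$ transition during them.
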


Taking this lemma for granted, we now prove Proposition \ref{careful loop} in the general case
when $x\neq y$.

\begin{proof}[Proof of Propositon \ref{careful loop}]
Let $\ell > 0$ and $\b^{(1)} : [0, \ell]\to\S^2$ be such that $\b^{(1)}$
is the shortest arclength parametrized
geodesic connecting $x$ to $y$. So $\b^{(1)}(0) = x$ and $\b^{(1)}(\ell) = y$.
(If $x$ and $y$ are antipodal then choose $\b^{(1)}$ to be
any of the two geodesics connecting them.)
\\
We apply Lemma \ref{carefull} first at $x$,
with $v^{(0)}$ as in the hypothesis of Proposition \ref{careful loop}, but with 
$$
\t v^{(1)} = (\b^{(1)})'(0)
$$
instead of $v^{(1)}$. The curve furnished by that lemma will be denoted $\b^{(0)}$.
\\
We denote by $\b^{(2)}$ the loop furnished by applying Lemma \ref{carefull}
at $y$ with $v^{(1)}$ as in the hypothesis of Proposition \ref{careful loop},
but with
$$
\t v^{(0)} = (\b^{(1)})'(\ell)
$$
instead of $v^{(0)}$.
Setting
$$
\b = \b^{(2)}\bullet \b^{(1)} \bullet \b^{(0)}
$$
we obtain the desired curve. In fact,
\begin{align*}
\Pi_{\b}v^{(0)} &=
\Pi_{\b^{(2)}}\Pi_{\b^{(1)}}(\b^{(1)})'(0) 
\\
&= \Pi_{\b^{(2)}}(\b^{(1)})'(\ell)
= v^{(1)},
\end{align*}
because $\b^{(1)} : [0, \ell]\to\S^2$ is an
arclength parametrized geodesic.
On $I^{(0)}$ and $I^{(2)}$ the function $K_g$ is
determined by Lemma \ref{carefull}, and on $I^{(1)}$ we set $K_g = 0$.
\end{proof}

\begin{proof}[Proof of Lemma \ref{carefull}]
Without loss of generality $x$ is the north pole $e_3$.
We may also assume that there exists $\theta\in (-\frac{\pi}{2}, \frac{\pi}{2}]$ such that
$$
v^{(0)} = R_{e_3}^{-\theta}e_1 = e^{-i\theta}
$$
and
\begin{equation}
v^{(1)} = R_{e_3}^{(2\theta)}v^{(0)} = e^{i\theta}.
\end{equation}
Here we identify $e^{i\delta}$ with the tangent vector
$$
\begin{pmatrix}
\cos\delta
\\
\sin\delta
\\
0
\end{pmatrix}\in T_{e_3}\S^2.
$$
The case $\theta = 0$ is trivial and the case $\theta < 0$ is obtained
by reversing the direction of travel. So we may
assume without loss of generality that $\theta > 0$.
In addition, subdividing the interval $[0, \theta]$
and iterating the following construction,
we may assume without loss of generality that 
\begin{equation}
\label{careful-theta}
\theta\in (0, 2\a_0].
\end{equation}
Geometrically, this iteration means that 
one adds several loops, where the initial direction of each loop
is dictated by a different direction $v^{(0)}$.
\\
After these reductions, we can now simply choose $\b$
to be a parametrization of two meridians.
In order to make this more precise, for $\a\in (-\pi, \pi)$
we define $b_{\a} : [0, \pi]\to\S^2$ by
$$
b_{\a}(t) =
\begin{pmatrix}
\cos\a \cdot \sin t
\\
\sin\a \cdot \sin t
\\
\cos t
\end{pmatrix}.
$$
This is an arclength parametrization of the meridian with longitude
$\a$ connecting the north pole $x = e_3$ to the south pole $-e_3$.
Now define $\b^{(1)}$, $\b^{(3)} : [0, \pi]\to\S^2$ by
$$
\b^{(1)} = b_{-\frac{\theta}{2}}
\mbox{ and }
\b^{(3)} = b_{\frac{\theta}{2}}(\pi - \cdot),
$$
and define $\b^{(0)}$, $\b^{(2)}$, $\b^{(4)} : [0, 1] \to\S^2$
by $\b^{(0)} = \b^{(4)}\equiv e_3$
and $\b^{(2)}\equiv -e_3$. We define $\b : I\to\S^2$ by
$$
\b = \b^{(4)} \bullet \b^{(3)} \bullet\b^{(2)} \bullet\b^{(1)} \bullet\b^{(0)}
$$
and the intervals $I^{(k)}\subset I$ as in Section \ref{Concatenation}
(so that, up to reparametrization, $\b$ equals $\b^{(i)}$ on $I^{(i)}$).
\\
For any $c_1$, $c_2\in\R$ and any interval $(t_0, t_1)$ let us denote by
$$
[t_0, t_1]\to\R,\ c_1\leadsto c_2
$$
some monotone function in $C^{\infty}([t_0, t_1])$
which agrees with $c_1$ near $t_0$ and with $c_2$ near $t_1$.
We define $K_g\in C_0^{\infty}(I)$ by setting
$$
K_g =
\begin{cases}
0\leadsto \frac{\theta}{2} &\mbox{ on }I^{(0)}
\\
\frac{\theta}{2} &\mbox{ on }I^{(1)}
\\
\frac{\theta}{2}\leadsto -\frac{\theta}{2} &\mbox{ on }I^{(2)}
\\
-\frac{\theta}{2} &\mbox{ on }I^{(3)}
\\
-\frac{\theta}{2}\leadsto 0 &\mbox{ on }I^{(4)}.
\end{cases}
$$
We define
$$
v(t) = \Pi_{\b|_{(0, t)}}v^{(0)} = \Pi_{\b|_{(0, t)}}e^{-i\theta}.
$$
We claim that
\begin{equation}
\label{careful-1}
v(1) = v^{(1)}
\end{equation}
and that $\t r_1(t)$ given in \eqref{carefull-1}
indeed defines an adapted frame for $\b : I\to\S^2$.
\\
In order to prove this, first notice that,
for all $\a$, $\delta\in (-\pi, \pi)$,
\begin{equation}
\label{careful-3}
\Pi_{b_{\a}|_{(0, t)}}e^{i\delta} =
R_{b_{\a}(t)}^{(\delta - \a)}b_{\a}'(t)\mbox{ for all }t\in [0, \pi].
\end{equation}
In fact, since $b_{\a}$ is an arclength parametrized geodesic,
it is enough to verify \eqref{careful-3} at $t = 0$, where
it is obvious because
$b_{\a}'(0) = e^{i\a}$.
\\
Applying \eqref{careful-3} at $t = \pi$ we obtain
\begin{align*}
\Pi_{b_{\a}}e^{i\delta} &=
R_{-e_3}^{(\delta-\a)}b_{\a}'(\pi) = R_{e_3}^{(\a - \delta)} b_{\a}'(\pi)
= - e^{i(2\a - \delta)}.
\end{align*}
Thus
$$
\Pi_{b_{\theta/2}}v^{(0)} 
= -e_1
= \Pi_{b_{-\theta/2}} v^{(1)}.
$$
This proves \eqref{careful-1}, which in turn implies \eqref{careful-iii}.
\\
Applying \eqref{careful-3} with $\delta = -\theta$ and $\a = -\theta/2$
we have
$$
\Pi_{b_{\frac{\theta}{2}}|_{(0,t)}}v^{(0)}
= R_{b_{-\frac{\theta}{2}}(t)}^{(-\theta/2)}b_{-\frac{\theta}{2}}'(t)
\mbox{ for all }t\in [0, \pi].
$$
So
\begin{equation}
\label{hispa-1}
\b' = |\b'| R_{\b}^{(\theta/2)}v \mbox{ on }I^{(1)}
\end{equation}
because $\b$ is a constant speed reparametrization of 
$b_{-\frac{\theta}{2}}$ on $I^{(1)}$.
Similarly, \eqref{careful-3} shows that
\begin{equation}
\label{hispa-2}
\b' = -|\b'| R_{\b}^{(-\theta/2)}v \mbox{ on }I^{(3)}
\end{equation}
because $\b$ is a constant speed reparametrization of 
$b_{\frac{\theta}{2}}(\pi - \cdot)$ on $I^{(3)}$.
\\
We define $\t r_1$ by \eqref{carefull-1}, that is,
$$
\t r_1 = R_{\b}^{(K_g)}v.
$$
In view of \eqref{hispa-1}, \eqref{hispa-2}
and since 
$$
\b' = 0\mbox{ on }I^{(0)}\cup I^{(2)}\cup I^{(4)},
$$
we see that $\t r_1\times\b' = 0$ everywhere.
Hence $\t r$ is an adapted frame for $\b$.
In view of \eqref{careful-theta} 
this implies \eqref{careful-ii}.
\\
Finally, \eqref{careful-i} is an immediate consequence of the construction.
\end{proof}

\subsubsection{Proof of Proposition \ref{ribaccess}}\label{Ribaccess}

As in the proof of Proposition \ref{accessible}
we choose $N\in\N$ and $\o r_1^{(1)}, ..., \o r_1^{(N)}\in\S^2$ such that
$\o\gamma$ lies in the interior of the convex hull of $\{\o r_1^{(1)}, ..., \o r_1^{(N)}\}$;
we set $\o r^{(0)}_1 = e_1$ and $\o r_1^{(N+1)} = \o r_1$, and we
choose a partition $0 = t_0 < t_1 < \cdots < t_N < t_{N+1} = 1$ of $I$.
Let $\o\b$ be the shortest arclength parametrized geodesic
with $\o\b(0) = e_3$ and $\o\b(1) = \o r_3$.

For $i = 0, \dots, N$ denote by $\b^{(i)} : I\to\S^2$ the curve
obtained by applying Proposition \ref{careful loop} with
$x = \o\b(t_i)$ and $y = \o\b(t_{i+1})$ as well as
$$
v^{(0)} = \o r_1^{(i)}
\mbox{ and }
v^{(1)} = \o r_1^{(i+1)}.
$$
Define $\b = \b^{(N+1)}\bullet \cdots\bullet\b^{(0)}$.
Define $r : I\to SO(3)$ by setting
$
r_3 = \b
$
and
\begin{equation}
\label{addingloop-1}
r_1(t) = \Pi_{\b_{(0, t)}} e_1.
\end{equation}
Then by Proposition \ref{careful loop} there is a 
function $K_g\in C_0^{\infty}(I)$ with \eqref{ribaccess-0}
such that, defining $\t r_1$ by \eqref{ribaccess-1},
we have
\begin{equation}
\label{ribaccess-2}
\b'\times \t r_1 = 0\mbox{ on }I.
\end{equation}
By Proposition \ref{careful loop} 
the curve $\b$ is continuous on $I$, as well as
smooth and immersed away from a finite set $I'\subset I$.
Hence $r$ is smooth away from $I'$, too.
By the remark following Proposition \ref{careful loop},
after a suitable simultaneous reparametrization of $r$ and $K_g$
we may assume that $r$ is in $C^{\infty}(\o I)$ and that $\b'\neq 0$
away from a finite number of points. Then the frame $\t r$ defined
by $\t r_3 = \b$ and by \eqref{ribaccess-1} is in $C^{\infty}(\o I)$.

Now we reparametrize again, this time as described in Section
\ref{Convex Integration}. Then the curve $\gamma : I\to\R^3$
defined by $\gamma(t) = \int_0^t r_1$ satisfies $\gamma(1) = \o\gamma$.
As before, the function $K_g$ (hence $\t r$) is reparametrized as well.
This yields a framed curve $(\gamma, r)$ and a frame $\t r$ adapted
to $r_3 : I\to\S^2$ with the desired properties.

\bigskip

\paragraph{Acknowledgements.} This work was supported by DFG
SPP 2256.

\def\cprime{$'$}

\bibliographystyle{acm}

\end{document}